\newcommand\mycom[2]{\genfrac{}{}{0pt}{}{#1}{#2}}
\newtheorem{lemma}{Lemma}[section]
\newtheorem{corollary}{Corollary}[section]
\newtheorem{conjecture}{Conjecture}[section]
\newtheorem{theorem}{Theorem}[section]
\journal{Theoretical Computer Science}
\title{On the Approximability of Robust Network Design}
\author[1,2]{Yacine Al-Najjar\corref{cor1}}
\ead{yacine.alnajjar@huawei.com}
\author[2]{Walid Ben-Ameur}
\ead{walid.benameur@telecom-sudparis.eu}
\author[1]{J\'er\'emie Leguay}
\ead{jeremie.leguay@huawei.com}
\address[1]{Huawei Technologies, Paris Research Center, France.}
\address[2]{Samovar, Telecom SudParis, Institut Polytechnique de Paris, France.}
\begin{document}

\begin{abstract}
{\color{black} Given the dynamic nature of traffic, we investigate the variant of robust network design where we have to determine the  capacity to reserve on each link   so that 
each demand vector belonging to a polyhedral set can be routed. The objective is either to minimize congestion or a linear cost. Routing is assumed to be fractional and dynamic (i.e., dependent on the current traffic vector).} 
We first prove that the 
robust network design problem with minimum congestion cannot be approximated within any constant factor. 
 Then, using the ETH conjecture, we get a $\Omega(\frac{\log n}{\log \log n})$ lower bound for the approximability of this problem. 
 This implies that the well-known $O(\log n)$ approximation ratio established by  R\"{a}cke in 2008 is tight. 
 Using Lagrange relaxation, we obtain a new proof of the $O(\log n)$ approximation.  An important consequence of the Lagrange-based reduction and our inapproximability results is that the robust network design problem with linear reservation cost cannot be approximated within any constant ratio. This answers a long-standing open question of  Chekuri (2007). \textcolor{black}{We also give another proof of the result of Goyal\&al (2009) stating that the optimal linear cost under static routing can be $\Omega(\log n)$ more expensive than the cost obtained under dynamic routing.} 
  Finally,  we show that even if only two given paths are allowed for each commodity, the robust network design problem   with minimum congestion or linear cost is hard to approximate within some constant.
 \end{abstract}

\begin{keyword}
Approximability, PCP, ETH, Robust Network Design.
\end{keyword}

\maketitle


\section{Introduction}

Network  optimization~\cite{wang2008overview, luna2006network} plays a crucial role for telecommunication operators since it permits to carefully invest in infrastructures, i.e. reduce capital expenditures. As Internet traffic is ever increasing, the network's capacity needs to be expanded through careful investments every year or even half-year.
However, the dynamic nature of the traffic due to ordinary daily fluctuations, long term evolution and unpredictable events requires to consider uncertainty on the traffic demand when dimensioning network resources.

Ideally, the network capacity should follow the demand. When the traffic demand can be precisely known, several approaches have been proposed to solve the capacitated network design problem using for instance decomposition methods and cutting planes~\cite{COSTA20051429, FRANGIONI20091229, Raack11}. But in practice, perfect knowledge of future traffic is not available at the time the decision needs to be taken. The dynamic nature of the traffic due to ordinary daily fluctuations, long term evolution and unpredictable events requires to consider uncertainty on traffic demands when dimensioning network resources. While overestimated traffic forecasts could be used to solve a deterministic optimization problem, it is likely to yield to a costly over-provisioning of the network capacities, which is not acceptable. Therefore, robust optimization under uncertainty sets is a must for the design of network capacities. In this context, our paper presents new approximability results on two tightly related variants of the robust network design problem, the minimization of either the congestion or a linear cost.

Let's consider an undirected graph $G = (V(G), E(G))$ representing a communication network. The traffic is characterized by a set of commodities $h \in \mathcal{H}$ associated to different node pairs. And the routing of a commodity can be represented by a flow $f^h \in \mathbb{R}^{E(G)}$ of intensity $d_h$. To take into account the changing nature of the demand, $d$ is assumed to be uncertain and more precisely to belong to a polyhedral set $\mathcal{D}$. 
The \emph{polyhedral model} was introduced in~\cite{kerivin03,Ben-Ameur2005} as an extension of the \emph{hose model}~\cite{Duffield,FINGERHUT1997287}, where limits on the total traffic going into (resp. out of) a node are considered. 

When solving a robust network design problem, several objective functions can be considered. 
Given a capacity $c_e$ for each edge $e$, one might be interested in minimizing the congestion given by $\max_{e \in E(G) } \frac{ u_e }{ c_e }$ where $u_e$ is the reserved capacity on edge $e$. 
Another common objective function  is given by the linear reservation cost $ \sum_{e \in E(G)} \lambda_e u_e$. 
 This can also represent the average congestion by taking $\lambda_e = \frac{1}{c_e}$.
The goal is to choose a reservation vector $u$ so that the network is able to support any demand vector $d \in \mathcal{D}$, i.e.,  there exists a (fractional) routing serving every commodity such that the total flow on each edge $e$ is less than the reservation $u_e$.

{\color{black}The Robust network design variant that we are focusing on in this paper,} is  referred to as  \emph{dynamic routing} in the literature since the network is optimized such that any realization of  traffic matrix in the uncertainty set has its own routing. 
The robust network design problem where a linear reservation cost is minimized was proved to be  co-NP hard in~\cite{Gupta01} when the graph is directed. A stronger co-NP hardness result is given in~\cite{hardnessChekuri} where the graph is undirected (this implies the directed case result). Some exact solution  methods for robust network design have been considered in~\cite{KKT15,Mattia13}. 
\textcolor{black}{Some special cases where dynamic routing is easy to compute have been described in \cite{BENAMEURMultistatic2010Longer,MINOUX2010597,FRANGIONI201136}.}

Routing with uncertain demands has received a significant interest from the community. As opposed to dynamic routing, \emph{static routing} or \emph{stable routing} was introduced in~\cite{kerivin03}: it consists in choosing a fixed flow $x^h$ of value $1$ for each commodity $h$. The actual flow $f^h(d)$ for the demand scenario $d$ will then be scaled by the actual demand $d_h$ of commodity $h$, i.e. $f^h(d) = d_h x^h$.
Static routing is also called  \emph{oblivious routing} in~\cite{Applegate,Azar}. In this case, polynomial-time algorithms to compute optimal static routing (with respect to either congestion or linear reservation cost) have been proposed~\cite{kerivin03,Ben-Ameur2005,Applegate,Azar} based on either duality or cutting-plane algorithms.

To further improve solutions of static routing and overcome complexity issues related to \textit{dynamic routing}, a number of restrictions on routing have been considered to design polynomial-time algorithms (see~\cite{surveyWalid, Poss2014} for a complete survey). This includes, for example, the multi-static approach, introduced in~\cite{Ben-Ameur2007Multistatic},
where the uncertainty set is partitioned using an hyperplane and routing is restricted to be static over each partition.  
This idea has been generalized in~\cite{SCUTELLA2009197} to unrestricted covers of the uncertainty set and an extension to share the demand between routing templates, called volume routing, has been proposed in~\cite{Zotkiewicz2009}. \cite{Ouorou2007} applied affine routing for robust network design, based on affine adjustable robust counterparts introduced in~\cite{ben2004adjustable}, restricting the recourse to be an affine function of the uncertainties. The performance of this framework has been extensively compared to the static and dynamic routing, both theoretically and empirically~\cite{poss2013,Poss2014}. In practice, affine routing provides a good approximation of the dynamic routing while it can be solved in reasonable time thanks to polynomial-time algorithms. Finally,  an approach encompassing the previous approaches is the multipolar approach proposed in~\cite{multipolarRouting, multipolarRobustOptimization}.

In this work, we will only focus on the complexity of the robust network design problem  under dynamic routing, while minimizing either congestion or some linear cost. 
To close this section, let us  summarize the main contributions of the paper  and review some related work.

\subsection{Our contributions}

\begin{itemize}
\item We first prove that the robust network design problem with minimum congestion cannot be approximated within any constant factor. The reduction is based on the PCP theorem and some connections with the Gap-$3$-SAT problem~\cite{bookArora}. The same reduction also allows to show inapproximability within $\Omega(\log \frac{ n }{ \Delta } )$ where $\Delta$ is the maximum degree in the graph and $n$ is the number of vertices.  

\item Using the ETH conjecture~\cite{ethImpagliazzo1,ethImpagliazzo2}, we prove a $\Omega(\frac{\log n}{\log \log n})$ lower bound for the approximability of the robust network design problem with minimum congestion. This implies that the well-known $O(\log n)$ approximation ratio that can be obtained using the result in~\cite{obliviousRackeH2008} is tight.

\item We show that any $\alpha$-approximation algorithm for the robust network design problem with linear costs directly leads to an $\alpha$-approximation for the problem with minimum congestion. The proof is based on  Lagrange relaxation.  
We obtain that robust network design with minimum congestion can  be approximated within $O(\log n)$. 
This was already proved in~\cite{obliviousRackeH2008} in a different way. 

\item An important consequence of the Lagrange-based reduction and our inapproximability results is that the robust network design problem with linear reservation cost cannot be approximated within any constant ratio. This answers a long-standing open question stated in~\cite{Chekuri2007survey}.

\item \textcolor{black}{Another consequence is a new proof for the existence of instances for which the optimal static solution can be $\Omega(\log n)$ more expensive than a solution based on dynamic routing, when a linear cost is minimized. This was already proved in \cite{goyal2009} in a different way.} 

\item We show that even if only two given paths are allowed for each commodity, there is a constant $k$ such that the robust network design problem with minimum congestion or linear costs cannot be approximated within $k$.  
\end{itemize}

\subsection{Related work}

{\color{black}
Let us first assume that the graph is undirected and a linear cost is minimized.  A result attributed to A. Gupta (\cite{Chekuri2007survey}, see also \cite{goyal2009} for a more detailed presentation)  leads to an $O(\log n)$ approximation algorithm for linear cost under dynamic fractional routing.  Furthermore, this approximation is achieved by a routing on a (fixed) single tree. In particular, this shows that the ratios between the dynamic and the static solutions under fractional routing ($ \frac{ {Lin}_{stat-frac} }{ {Lin}_{dyn-frac} } $)  ($Lin$ denotes here the optimal linear cost of the solution) and between single path  and fractional routing under the static model ($ \frac{ {Lin}_{stat-sing} }{ {Lin}_{stat-frac} } $) is in $O( \log n)$ and provides  an $O(\log n)$ approximation for static single path routing ${Lin}_{stat-sing}$. On the other hand \cite{Olver2014} shows that the static single path problem cannot be approximated within a $\Omega( \log^{ \frac{1}{4} - \epsilon } n )$ ratio unless $NP \not\subset ZPTIME( n^{ \polylog (n)} )$. As noticed in \cite{goyal2009}, this implies (assuming this complexity conjecture) that the gap $\frac{ {Lin}_{stat-sing} }{ {Lin}_{stat-frac} } $ is in $\Omega( \log^{ \frac{1}{4} - \epsilon } n )$. \cite{goyal2009} has shown that the gap $ \frac{ {Lin}_{stat-frac} }{ {Lin}_{dyn-frac} } $  is $\Omega( \log n )$.}

{\color{black}For the linear cost and undirected graphs, an extensively studied polyhedron is the \emph{symmetric} hose model.  The demand vector is here not oriented  (i.e, there is no distinction between a demand from $i$ to $j$ and a demand from $j$ to $i$), and uncertainty is defined by considering an upper-bound limit $b_i$ for the sum of demands  related to node $i$.  
 A 2-approximation has been found for the dynamic fractional case~\cite{FINGERHUT1997287,Gupta01} based on tree routing (where we route through a static tree that should be found)  showing that $\frac{ {Lin}_{stat-tree} }{ {Lin}_{dyn-frac} } \leq 2$. It has been conjectured that this solution resulted in an optimal solution for the static single path routing. This question has been open for some time and has become known as the \emph{VPN conjecture}. It was finally answered by the affirmative in \cite{goyal2008}. 
 The \emph{assymetric} hose polytope was also considered in many papers.  An approximation algorithm  is proposed to compute ${Lin}_{stat-sing}$ within a ratio of  $3.39$  \cite{EIS}
  (or more precisely $2$ plus the best approximation ratio for the Steiner tree problem). If $\mathcal{D}$ is a \emph{balanced}
  asymmetric hose polytope, i.e., $\sum_{v \in  V} b^{out}_v = \sum_{v \in  V} b^{in}_v$ where $b^{in}_v$ (resp. $b^{out}_v$) is the upper bound for the traffic entering into (resp. going out of) $v$, then the best approximation factor becomes $2$ \cite{EIS}.
  Moreover, if we assume that $b^{out}_v = b^{in}_b$, then ${Lin}_{stat-sing}$ is easy to compute and we get that ${Lin}_{stat-tree} = {Lin}_{stat-sing}$ \cite{OLVER10}. In other words, there is some similarity with the case where $\mathcal{D}$ is a symmetric hose polytope.
}

{\color{black} When congestion is considered,}
\cite{Racke2002} proved the existence of  an oblivious (or static) routing with a competitive ratio of $O(\log^3 n)$ with respect to optimum routing of any traffic matrix.
Then,~\cite{HAR03} improved the bound to $O( \log^2 n \log \log n )$ and gave a polynomial-time algorithm to find such a static routing. Finally,~\cite{obliviousRackeH2008} described an $O(\log n)$ approximation algorithm for static routing with minimum congestion.  
Notice that the bound given by static routing cannot provide a better bound than $O(\log n)$ since  a lower bound of  $\Omega(\log n)$ is achieved by static routing for planar graphs~\cite{Maggs97,BARTAL199919}.
{\color{black}
It has also been shown  in \cite{Hajiaghayi2007polyPath} that the gap between the dynamic fractional routing and a dynamic fractional routing restricted to a  polynomial number of paths can be $\Omega( \frac{\log n}{ \log \log n} )$. 

When a directed graph is considered and congestion is minimized,  \cite{Azar} has shown that the gap between static fractional routing and  dynamic fractional routing can be $\Omega(\sqrt{n})$ while \cite{Hajiaghayi2007}  proves that the gap is upper-bounded by $O(\sqrt{k} n^{\frac{1}{4}} \log n)$  (where $k = |\mathcal{H}|$ is the number of commodities). More results can be found in \cite{Hajiaghayi2007} and the references therein.  
}

Using an approximate separation oracle for the dual problem to obtain an approximate solution of the primal is a well-known technique already used in~\cite{Fleischer06,Carr02,Jain03} at least in the context of packing-covering problems. Lagrangian relaxations are also used in    
\cite{Garg1998FasterAS,Plotkin1991FastAA,Young95} to produce dual solutions that are near-optimal.

\section{From Gap-3-SAT to robust network design with minimum congestion}

\label{sec:reduction}
Given an edge $e$, let $s(e)$ and $t(e)$ be the extremities of $e$.   
Similarly to edges,  for a commodity $h \in \mathcal{H}$,  let $s(h)$  and $t(h)$ denote the endpoints of  $h $. 
And let $\mathcal{U}( \mathcal{D} )$ be the set of  $u \in \mathbb{R}^{E(G)} $ such that each traffic vector $d \in \mathcal{D}$ can be routed on the network when a capacity \textcolor{black}{$u_e$} is assigned to edge $e$. Since $\mathcal{D}$ is polyhedral, $\mathcal{U}( \mathcal{D} )$ is also polyhedral (see, e.g,~\cite{Chekuri2007survey}).

We are interested in minimizing the congestion under  polyhedral uncertainty and dynamic routing: $ \min\limits_{ u \in \mathcal{U}( \mathcal{D} ) } \max\limits_{e \in E(G) } \frac{ u_e }{ c_e }$.

Given a polytope represented by  $A x \leq b$, the size of the polytope denotes the total encoding size of the entries in $A$ and $b$.   

Our first main result is related to the inapproximability of the minimum congestion problem within a constant factor. 

\begin{theorem}
\label{inapproximabilityPNP}
 Unless $P = NP$, the minimum congestion problem cannot be approximated with a polynomial-time algorithm within any constant factor even if $\mathcal{D}$ is given by $\{d: A d + B {\color{black} \psi } \leq b \}$ whose size  is polynomially bounded by $|V(G)|$. 
\end{theorem}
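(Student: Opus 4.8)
The plan is to reduce from Gap-3-SAT, whose inapproximability within a fixed constant factor is guaranteed by the PCP theorem, and to build from a formula $\phi$ a congestion instance whose optimum separates the satisfiable case from the $(1-\epsilon)$-unsatisfiable case by a fixed multiplicative gap; a gap-amplification step then pushes this separation past any prescribed constant $k$. Concretely, for a target ratio $k$ I would first produce a base reduction with constant gap and then compose it with itself $O(\log k)$ times, so that the final gap exceeds $k$ while the size of $G$ and of the description of $\mathcal{D}$ remains polynomial (the number of compositions depending only on the constant $k$).

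For the base gadget, I would associate to each variable $x_i$ a pair of edges carrying ``true'' and ``false'' flow through a shared bottleneck, and to each clause $C_j$ a single commodity $h_j$ whose demand may be routed through the gadget of any of its three literals. The uncertainty set $\mathcal{D}$ is then defined as the projection onto the demand coordinates $d$ of a polytope in the extended space $(d,\psi)$, where the auxiliary variables $\psi$ encode, for each clause, which literal absorbs its unit of demand and enforce the variable-consistency constraints linking clauses that share a variable. Writing $\mathcal{D}=\{d: \exists\,\psi,\ A d + B\psi \le b\}$ is exactly what keeps the encoding polynomial in $|V(G)|$: although $\mathcal{D}$ itself may have exponentially many facets, its extended formulation has only one block of constraints per variable and per clause. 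For \emph{completeness}, a satisfying assignment yields a reservation $u\in\mathcal{U}(\mathcal{D})$ in which every shared bottleneck carries flow in a single direction, so that every $d\in\mathcal{D}$ routes with congestion $1$.

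For \emph{soundness}, I would show that any reservation $u\in\mathcal{U}(\mathcal{D})$ of congestion strictly below the target threshold can be decoded into a fractional assignment of the variables satisfying more than a $(1-\epsilon)$ fraction of the clauses, contradicting the Gap-3-SAT promise. The idea is that low congestion on a variable bottleneck forces the corresponding flow to be almost entirely ``true'' or almost entirely ``false,'' and that the clause commodities, which $\mathcal{D}$ can force to traverse any of their literals simultaneously, would overload some bottleneck whenever no single assignment satisfies enough clauses.

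This decoding step, together with the verification that the product composition multiplies gaps, is where the main difficulty lies: I must argue quantitatively that \emph{every} $u\in\mathcal{U}(\mathcal{D})$ --- and not merely those arising from integral assignments --- fails to achieve low congestion in the NO case. I expect this to require an LP-duality or flow-cut argument over the extreme points of the extended polytope defining $\mathcal{D}$, rather than a purely combinatorial one, so that the extracted fractional assignment is genuinely feasible and the reduction is robust to the fractional nature of both the routing and the demand set.
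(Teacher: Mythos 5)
Your high-level strategy (reduce from Gap-3-SAT via the PCP theorem, build a constant-gap congestion gadget, amplify by composition, and keep $\mathcal{D}$ small via an extended formulation with auxiliary variables $\psi$) is the same as the paper's, but the step that carries essentially all of the difficulty is left as an intention rather than a proof, and your choice of gap orientation is exactly what makes that step hard. You orient the gap as: satisfiable $\Rightarrow$ congestion $1$, far-from-satisfiable $\Rightarrow$ high congestion. With that orientation, soundness requires showing that \emph{every} reservation $u\in\mathcal{U}(\mathcal{D})$ has high congestion in the NO case, i.e.\ decoding an arbitrary fractional reservation into a good fractional assignment --- precisely the step you defer to an unspecified ``LP-duality or flow-cut argument.'' The paper orients the gap the other way: satisfiability is the \emph{high}-congestion case. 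This is what makes the hard quantifier tractable, because the satisfying assignment itself is a witness: it yields an explicit demand vector in which, on every clause path, some per-edge commodity has demand $1$, and the $m$ edges carrying those commodities form a cut of capacity $m$ that must also be crossed by the extra commodity $h_0$ of demand $m(1-\rho)$; hence every routing of that single demand has an edge with congestion at least $1+(1-\rho)$. The NO case then only needs an explicit routing (unsatisfied clauses leave at least $m(1-\rho)$ free paths for $h_0$), plus a verification that the extreme points of $\mathcal{D}$ have $0/1$ assignment variables so that checking integral demands suffices. Reversing the roles as the paper does eliminates the decoding lemma your plan hinges on; as written, your proposal has a genuine hole at its center.

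Two further concrete problems. First, your gadget conflates demand uncertainty with routing: under dynamic routing, ``which literal absorbs a clause's unit of demand'' is a decision of the network designer made \emph{after} seeing $d$, so the auxiliary variables $\psi$ of $\mathcal{D}$ cannot dictate it; $\psi$ only shapes which demand vectors exist. In the paper the roles are kept separate --- the adversary's variables $\xi_l$ (with $\xi_l+\xi_{\lnot l}=1$) fix the demands $d_{h_{i,j}}=\xi_{l_{i,j}}$ of commodities pinned to clause-path edges, and a distinct large commodity then competes for residual capacity. Second, your amplification claim (composition \emph{multiplies} gaps, so $O(\log k)$ rounds suffice) is asserted but unverified; in the paper's recursion, where each edge of $G_1$ is replaced by a copy of the previous instance with demands scaled as $d^{i,j}\in\xi_{l_{i,j}}\mathcal{D}_{\gamma-1}$, the gap grows only \emph{additively} to $1+\gamma(1-\rho)$, and even that requires an extreme-point analysis of the composed polytope and an inductive cut argument. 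Additive growth is still enough for any constant factor (and polynomiality holds for fixed $\gamma$), so the multiplicative claim is not needed --- but neither it nor its additive substitute is established in your proposal.
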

Notice that it is important to consider polyhedral uncertainty sets that are easy to describe (otherwise the inapproximability results would be a direct consequence of the difficulty to separate from the uncertainty set). 

To prove Theorem~\ref{inapproximabilityPNP}, we will need the PCP (Probabilistically Checkable Proof) theorem~\cite{bookArora} and an intermediate lemma. 
For a 3-SAT formula $\varphi$ we note $val( \varphi )$ the maximum fraction of the clauses which are satisfiable at the same time. In particular, $val( \varphi ) = 1$ means that $\varphi$ is satisfiable.
The problem where we have to decide if $val(\varphi) < \rho$ or $val(\varphi) = 1 $ for a 3-SAT formula $\varphi$ is called Gap-3-SAT. 
{\color{black}
The instances such that $\rho \leq val(\varphi) < 1$ do not need to be considered. 
One way to state PCP theorem is to say that there exists a constant $0 < \rho < 1$  for which Gap-3-SAT is NP-hard.  In other words, it is NP-hard to distinguish between satisfiable 3-SAT formulas and those for which strictly less than a fraction $\rho$ of clauses can be simultaneously satisfied. 
}

To prove the theorem \ref{inapproximabilityPNP}, we will use the following lemma (where \emph{cong} denotes the optimal congestion of the corresponding instance).

\begin{lemma}
\label{lemmaInapproximability}
 For every $\gamma \in \mathbb{N}$ there is a mapping $f_\gamma$ computable in polynomial time from 3-SAT instances to minimum congestion instances defined by an undirected graph  $G_\gamma$, a set of commodities $\mathcal{H}_\gamma$ and a polytope $\mathcal{D}_\gamma = \{d: A_\gamma d + B_\gamma \psi_\gamma \leq b_\gamma \}$  such that   $|V(G_\gamma)| = O(m^{\gamma})$, $|E(G_\gamma)| = O(m^{\gamma})$ and the size of $\mathcal{D}_\gamma$ is $O(m^{c \gamma})$ where $c$ is some positive constant and $m$ is the number of clauses. The mapping satisfies the following:  
\begin{itemize}
    \item $ val(\varphi) = 1 \implies  cong(f_\gamma(\varphi) ) \geq 1 + \gamma (1 -\rho)  $
    \item $val(\varphi) < \rho  \implies cong(f_\gamma(\varphi) ) \leq 1$.
\end{itemize}
\end{lemma}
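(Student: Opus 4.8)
\emph{Reformulation.} The starting point I would use is a reformulation that collapses the min--max robust problem into a single maximization over $\mathcal D$. Since $\mathcal U(\mathcal D)$ is upward closed, the optimal reservation can be taken proportional to $c$, and one obtains in a line that
\[
\min_{u\in\mathcal U(\mathcal D)}\max_{e}\frac{u_e}{c_e}=\max_{d\in\mathcal D}\mathrm{cong}_c(d),
\]
where $\mathrm{cong}_c(d)$ is the minimum congestion needed to route the \emph{single} demand vector $d$ within the fixed capacities $c$. This lets me view the instance as an adversary choosing one $d\in\mathcal D$ and a routing player answering with a min-congestion multicommodity flow; the dual $\mathrm{cong}_c(d)=\max_{\ell\ge 0}\frac{\sum_h d_h\,\mathrm{dist}_\ell(h)}{\sum_e c_e\ell_e}$ then provides a lower-bound handle (exhibit a metric) and an upper-bound handle (exhibit a routing).

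\emph{Base gadget.} For each occurrence of a literal $\ell$ in a clause $C_j$ I would place a \emph{gate} bridge $g_{j,\ell}$ of capacity $1$ lying on the unique path of a dedicated auxiliary commodity whose demand value is the truth value of $\ell$; these values are tied together by a single projected coordinate $\psi_i\in[0,1]$ per variable ($\mathrm{val}(x_i)=\psi_i$, $\mathrm{val}(\bar x_i)=1-\psi_i$), which is precisely the role of the auxiliary variable $\psi_\gamma$ in $\mathcal D_\gamma$. The residual capacity of $g_{j,\ell}$ is then $1-\mathrm{val}(\ell)$, so the gate is open exactly to the extent that $\ell$ is false. Each clause carries a unit demand from $s_j$ to $t_j$ whose cheap route passes in series through $g_{j,\ell_1},g_{j,\ell_2},g_{j,\ell_3}$, its only alternative being one global bottleneck $e^*$. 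The cheap route absorbs at most $\min_k(1-\mathrm{val}(\ell_k))=1-\max_k\mathrm{val}(\ell_k)$ units, so the flow forced onto $e^*$ by clause $j$ equals its fractional satisfaction $\mathrm{sat}_j=\max_k\mathrm{val}(\ell_k)$; with every other edge given ample capacity, $e^*$ is the only possible bottleneck and, setting $c_{e^*}=m/(2-\rho)$, the robust congestion equals $\frac{\sum_j\mathrm{sat}_j}{c_{e^*}}$ maximized over $\psi$.

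\emph{Why fractional assignments do not help.} The adversary's objective $\sum_j\mathrm{sat}_j=\sum_j\max_k\mathrm{val}(\ell_k)$ is a \emph{convex} function of $\psi$ (a sum of maxima of linear functions), hence maximized at a vertex of $[0,1]^n$, i.e.\ at an integral assignment. Thus $\max_\psi\sum_j\mathrm{sat}_j=val(\varphi)\,m$ with no integrality gap, and the base congestion is exactly $val(\varphi)(2-\rho)$: this is $\ge 1+(1-\rho)$ when $val(\varphi)=1$ and $\le\rho(2-\rho)=1-(1-\rho)^2<1$ when $val(\varphi)<\rho$. This convexity observation is the crux that makes the Gap-$3$-SAT hardness survive the fractional relaxation inherent to congestion routing; a naive gadget with additive clause satisfaction would collapse here, since the LP relaxation of MAX-$3$-SAT has no gap.

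\emph{Amplification and main obstacle.} To turn the additive gap $(1-\rho)$ into $\gamma(1-\rho)$ I would compose the base gadget $\gamma$ times, replacing the bottleneck $e^*$ at each level by a scaled copy of the whole construction and choosing the capacities so that the per-level excess \emph{adds} rather than merely multiplies: a globally satisfying assignment forces an extra $(1-\rho)$ of normalized load at each of the $\gamma$ levels, while an assignment satisfying fewer than a $\rho$-fraction leaves enough residual capacity, simultaneously at every level, to route within $c$. Each level multiplies the vertex and edge counts by $O(m)$ and the polytope description by a constant, giving $|V(G_\gamma)|,|E(G_\gamma)|=O(m^\gamma)$ and size $O(m^{c\gamma})$, and coupling all copies to the same $\psi$ keeps $\mathcal D_\gamma$ in the stated projected form. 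I expect the base case and the convexity argument to be routine; the delicate part, and the main obstacle, is proving that the excesses accumulate \emph{additively} across the $\gamma$ nested levels (completeness) while one global routing stays within all capacities everywhere at once (soundness), which demands a careful simultaneous capacity scaling and rules out hidden cross-level interference.
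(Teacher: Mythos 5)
Your reformulation $\min_{u\in\mathcal U(\mathcal D)}\max_e u_e/c_e=\max_{d\in\mathcal D}\mathrm{cong}_c(d)$ and your convexity argument for why worst-case assignments may be taken integral are both sound (the latter parallels the paper's extreme-point lemma). But your base gadget's analysis contains a genuine error: it treats the gate capacities as hard constraints, whereas in the congestion objective capacities only normalize loads, so the routing player may overload the gates too, paying congestion there. Consequently the identity ``robust congestion $=\max_\psi \sum_j \mathrm{sat}_j / c_{e^*}$'' is false. Concretely, under a satisfying assignment the router sends $f$ units of each clause demand through its (already unit-loaded) gates and $1-f$ onto $e^*$; balancing $1+f=(1-f)(2-\rho)$ gives congestion exactly $1+\frac{1-\rho}{3-\rho}$, strictly below the $1+(1-\rho)$ you claim. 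The flaw comes precisely from giving the clause demand a private escape edge $e^*$ disjoint from the gates: every cut certifying a lower bound must also pay for $c_{e^*}$, diluting the gap. The paper's gadget avoids this by design: the overflow commodity $h_0$ has no separate escape edge and must share the $m$ unit-capacity three-edge paths that carry the literal demands, so the cut taking one ``true-literal'' edge per clause has capacity $m$ against crossing demand $m+m(1-\rho)$, yielding the clean bound $2-\rho$.

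The second and more serious gap is that the amplification --- which is the entire content of the lemma, since the base case alone gives only a fixed constant gap --- is not proved; you explicitly defer it as ``the main obstacle.'' Moreover, your scheme (replace the single bottleneck $e^*$ at each level by one copy) is inconsistent with your own size bound: it yields $|V(G_\gamma)|=O(\gamma m)$, not $O(m^\gamma)$, and it is unclear that the per-level excesses would add rather than be washed out by the same balancing effect described above. The paper's recursion instead replaces \emph{every} edge $e_{i,j}$ of $G_1$ by a copy of $G_{\gamma-1}$ whose entire demand polytope is scaled by the corresponding literal variable, $d^{i,j}\in\xi_{l_{i,j}}\mathcal D_{\gamma-1}$. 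A false literal turns its copy into an empty conduit of terminal-to-terminal capacity $m^{\gamma-1}$, which gives soundness: the at least $m(1-\rho)$ unsatisfied clauses provide free routes for the whole top-level demand $m^\gamma(1-\rho)$, while activated copies route internally with congestion at most $1$ by induction (invoking extreme-point integrality). A true literal fully activates its copy, and completeness follows from a nested-cut argument: the union over clauses of the recursive cuts inside the ``true'' copies is a cut of $m^\gamma$ unit edges across which the internal demands, $m\cdot m^{\gamma-1}\bigl(1+(\gamma-1)(1-\rho)\bigr)$, and the top-level demand, $m^\gamma(1-\rho)$, accumulate additively. It is this cut-additivity across disjoint sub-cuts, together with the literal-scaled recursion, that your proposal is missing; without it the stated bounds $1+\gamma(1-\rho)$ are not established.
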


\begin{proof}{ \emph{ of Theorem \ref{inapproximabilityPNP}  }  }
 We are going to use Lemma \ref{lemmaInapproximability} and PCP Theorem for the proof.
Suppose that congestion can be approximated in polynomial time within a constant approximation factor $\alpha$. We first choose $\gamma$ such that $\alpha < 1 + \gamma(1 -\rho) $.

{\color{black} Starting from a 3-SAT formula such that either $val(\varphi) < \rho$ or $val(\varphi) = 1$, we construct $f_\gamma ( \varphi )$ in polynomial time. The optimal congestion will satisfy either $cong(f_\gamma(\varphi) ) \geq 1 + \gamma (1 -\rho)  $ or  $ cong(f_\gamma(\varphi) )  \leq 1$. 
 Applying the  $\alpha$-approximation to $f_\gamma ( \varphi )$ provides an  approximate value $\Tilde{\beta}$ for congestion. 
If $\Tilde{\beta} < 1 + \gamma(1 -\rho)$ holds, then we can deduce that $ cong(f_\gamma(\varphi) ) \leq \Tilde{\beta} < 1 + \gamma(1 -\rho)$. This implies that
$cong(f_\gamma(\varphi)  \leq 1$ and hence
$val(\varphi ) < \rho$.
Otherwise, we have  $\Tilde{\beta} \geq 1 + \gamma(1 -\rho) $ and  $\alpha \times cong(f_{\gamma }( \varphi) ) \geq \Tilde{\beta}$ (since $\Tilde{\beta}$ is an $\alpha$-approximation), leading to  $cong(f_{\gamma }( \varphi) ) \geq \frac{1 + \gamma(1 -\rho)}{ \alpha} > 1$.  We consequently have $cong(f_{\gamma }( \varphi) ) \geq  1 + \gamma(1 -\rho) $ and  
 $val(\varphi ) = 1$. This proves that a constant $\alpha$-approximation for the congestion problem allows the solution of Gap-3-SAT. 
}
Furthermore, as the size of the polytope used in Lemma~\ref{lemmaInapproximability} is $O(m^{c \gamma})$ while $|V(G_\gamma)| = O(m^{\gamma})$, its size is polynomially bounded in the number of vertices as announced in Theorem~\ref{inapproximabilityPNP}
\end{proof}

We are now going to prove Lemma \ref{lemmaInapproximability} by  first constructing instances of the congestion problem leading to some inapproximabilty factor. Then, this factor is increased by recursively building larger instances with higher values of $\gamma$.

\begin{proof}{ \emph{of Lemma \ref{lemmaInapproximability}, case $\gamma = 1$ }  }
\begin{figure}
\center
\includegraphics[width=13.5cm,height = 3.5cm]{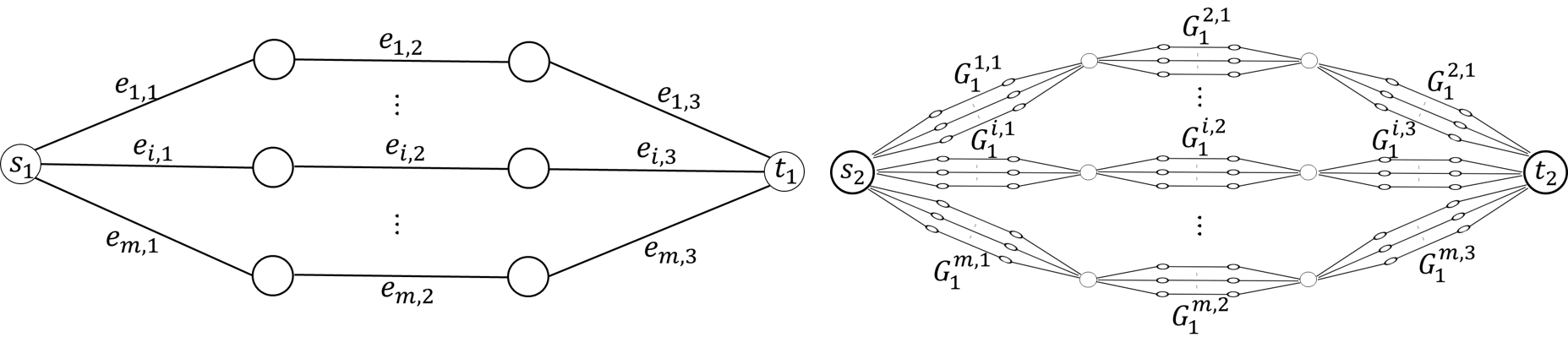}
\caption{$G_1$ and $G_2$}
\label{fig:g1}
\end{figure}

We start with a 3-SAT formula $\varphi$, with $m$ clauses and $r$ variables. We note $\mathcal{L} = \{l_1, \ldots, l_r,\lnot l_1, \ldots, \lnot l_r \}$ the set of the literals appearing in  formula $\varphi$ and $l_{i,j}$ the literal appearing in the $i$-th clause $C_i$ at the $j$-th position for $i=1,..., m$ and $j=1,2,3$ (it is not restrictive to assume that each clause contains exactly $3$ literals). 

We build as follows a graph  $G_1$ and  a set of commodities $\mathcal{H}_1$. 
For each $i = 1, ..., m$, $j = 1,2,3$ we add 3 consecutive edges $e_{i,j}$ (i.e. such that $t(e_{i,1} ) = s( e_{i,2 } )$ and $t(e_{i,2} ) = s( e_{i,3 })$)  and 3 commodities $h_{i,j}$ with $s( h_{i,j} ) = s( e_{i,j} ) $ and $ t( h_{i,j} ) = t( e_{i,j} ) $.  We impose that all nodes $s(e_{i,1})$ (resp. $t(e_{i,3})$) for $i=1,...,m$ are  equal to a single node noted $s_1$ (resp. $t_1$) (see Figure \ref{fig:g1}).
 We consider an additional commodity $h_0$ between $s_1$ and $t_1$. 
{\color{black}
We create a polyhedron $\mathcal{D}_1$ of the form $\mathcal{D}_1 = \{d: A_1 d + B_1 \psi_1 \leq b_1 \} $ as follows. We  consider for each literal $l \in  \mathcal{L}$  a non-negative variable $\xi_l$ and add  for $k = 1, ... , r$ the constraint
$\xi_{l_k} + \xi_{\lnot l_k} = 1$. We also consider, for $i=1,...,m$ and $j=1,2,3$, the constraint $d_{h_{i,j} } = \xi_{l_{i,j}}$. A constraint related to $d_{h_0}$ is also integrated: 
 $d_{h_0} \leq m (1 - \rho)$.} 
 Finally, the capacity $c_e$ of each  edge $e$ is here equal to $1$ ($c_e = 1$).

 If $val(\varphi ) = 1$,   then there is a demand vector such that 
  for each path between $s_1$ and $t_1$ (there is one path corresponding to each clause),  at least one commodity  whose endpoints are on the path  is equal to $1$ (a commodity corresponding to a true literal).  This implies that all paths are blocked and thus the optimal routing for commodity $h_0$ is to equally spread $m (1 - \rho)$  between the $m$ paths leading to a congestion  of $1 + (1 - \rho )$.

  Let us now assume that $val(\varphi ) < \rho$. 
  Notice first that the components of the extreme points of the polyhedron $\mathcal{D}_1$  are integers (except $d_{h_0}$). This is due to the fact that $\mathcal{D}_1$ can be seen as a coordinate projection of the higher dimensional polytope $\{(d,\psi) | A_1 d + B_1 \psi \}$ whose extreme points are obviously integers (except $d_{h_0}$). 
For such an extreme  demand vector $d \in \mathcal{D}_1$ there are at least $m (1 - \rho)$ free paths to route the demand $d_{h_0}$ allowing a congestion less than or equal to $1$. This implies that all demands in $\mathcal{D}_1$ can also be routed with a congestion less than or equal to $1$

Observe that  $|V(G_1)| = O(m)$, $|E(G_1)| = O(m)$, $\mathcal{D}_1$ has the appropriate form ($\mathcal{D}_1 = \{d: A_1 d + B_1 \psi_1 \leq b_1 \}$) and the size of $\mathcal{D}_1$ is $O(m^{c})$ for some constant $c$. 
 \end{proof}

\begin{proof}{ \emph{of Lemma \ref{lemmaInapproximability}, case $\gamma \geq  2$ } }

For $\gamma\geq 2$, having constructed $G_{\gamma-1} , \mathcal{H}_{\gamma-1} , \mathcal{D}_{\gamma -1} $, we build $G_{\gamma} , \mathcal{H}_{\gamma} , \mathcal{D}_{\gamma} $  as follows.
 We will construct the graph $G_\gamma$, by taking the graph $G_1$ and replacing each edge by a copy of the graph $G_{ \gamma-1 }$ denoted by $G^{{i,j} }_{\gamma-1}$. Each copy $G^{{i,j} }_{\gamma-1}$ contains a node $s_{\gamma-1}$  that is identified with $s(e_{i,j})$  and a  node $t_{\gamma-1}$ identified with $t(e_{i,j})$ (see Figure \ref{fig:g1}). 
All commodities related to $G^{{i,j} }_{\gamma-1}$ (belonging to $ \mathcal{H}_{\gamma-1} $) are also considered as commodities of $\mathcal{H}_{\gamma}$. Let us use $d^{i,j} \in \mathbb{R}^{\mathcal{H}_{\gamma-1} }$
to denote the related demand vector. 
$\mathcal{H}_{\gamma}$  also contains a non-negative commodity $h_{0,\gamma} $ constrained by $d_{h_{0,\gamma} } \leq  m^\gamma (1 - \rho)$. Thus $|\mathcal{H}_{\gamma}| = 1 + 3m \times |\mathcal{H}_{\gamma - 1}|$. 

{\color{black}{We are going to build an uncertainty set $\mathcal{D}_{\gamma}$  as a coordinate projection of a higher-dimensional polyhedron  $\Xi_\gamma$, involving  demand variables in addition to auxiliary non-negative variables $\xi_{l}$ related to literals, and also auxiliary variables $\psi^{i,j}_{\gamma - 1}$ related to $G^{{i,j} }_{\gamma-1}$ and the description of $\mathcal{H}_{\gamma-1}$. We gradually explain the construction.} 
For $k = 1, ..., r$, we add the constraint $\xi_{l_k} + \xi_{\lnot l_k} = 1$. 
And for  $e_{i,j} \in E(G_{ 1 } )$, 
we impose that $d^{i,j} \in  \xi_{l_{i,j}} \mathcal{D}_{\gamma-1}$  $:=\{ \xi_{l_{ij}} d_0 | d_0 \in \mathcal{D}_{\gamma - 1 } \} $}.  
Let us explain how this can be done. By induction, we know that $\mathcal{D}_{\gamma-1} = \{d: A_{\gamma -1 }d + B_{\gamma - 1} \psi_{\gamma - 1} \leq b_{\gamma - 1} \}$ and this representation includes (among others) non-negativity constraints of all variables in addition to constraints implying that all variables are upper-bounded.  Then by writing $A_{\gamma -1 } d^{i,j} + B_{\gamma - 1} \psi^{i,j}  \leq \xi_{l_{i,j}} b_{\gamma - 1}$, we can ensure that  $\xi_{l_{i,j}} = 0$ implies $d^{i,j}=0$, while  $\xi_{l_{i,j}} > 0$ leads to $\frac{1}{\xi_{l_{i,j}}} d^{i,j} \in \mathcal{D}_{\gamma-1}$.  In particular when $\xi_{l_{i,j}} =0$,
from  outside,  the whole subgraph corresponding to  $G^{i,j}_{\gamma-1}$ acts like a single edge of capacity $m^{\gamma-1}$.

$\mathcal{D}_{\gamma}$ can be seen as the projection of a polytope $\Xi_{\gamma} = \{ (d,\psi_\gamma ) | A_{\gamma} d + B_\gamma \psi_\gamma \leq b_\gamma \}$ where $\psi_\gamma$ contains the auxiliary variables  appearing in  all levels. More precisely, $\Xi_{\gamma}$ is defined by: 
\begin{align}
  d_{h_{0,\gamma} } & \leq   m^\gamma (1 - \rho) \nonumber  \\
 - d_{h_{0,\gamma} } & \leq  0, \quad \nonumber \\ 
 - \xi_l  & \leq 0, \quad  \forall  l \in \mathcal{L} \nonumber  \\
  \xi_{l_k} + \xi_{\lnot l_k} & \leq  1, \quad \forall k = 1, ..., r \nonumber \\
 -\xi_{l_k} - \xi_{\lnot l_k} & \leq   -1, \quad
 \forall k = 1, ..., r \nonumber \\
 A_{\gamma - 1} d^{i,j} + B_{\gamma - 1} \psi^{i,j}_{\gamma - 1 } - \xi_{l_{i,j}} b_{\gamma - 1} & \leq  0, \quad    \forall i = 1,...,m, j = 1, 2, 3. \label{eq:induc} 
\end{align}

By simple induction, we have  $|V(G_\gamma)| = O(m^{\gamma})$, $|E(G_\gamma)| = O(m^{\gamma})$ and the size of $\mathcal{D}_\gamma$ is $O(m^{c \gamma})$ where $c$ is some positive constant. 

We observe that all extreme points of $\Xi_\gamma$  are such that  $\xi_l \in  \{0,1\}$ for $l \in \mathcal{L}$. To verify that, we first recall that  constraints \eqref{eq:induc} are equivalent to $d^{i,j} \in  \xi_{l_{i,j}} \mathcal{D}_{\gamma-1}$ (in this way, the vectors $\psi^{i,j}_{\gamma - 1 }$ can be ignored). 
Second, let $\mathcal{L_+}$ be the set of literals appearing in positive form.
We observe that variables $\xi_l$ for $l \in \mathcal{L_+}$ are pairwise independent. Only variables $d^{i,j}$  such that either $l_{i,j} = l$ or $l_{i,j} = \lnot l$ depend on $\xi_l$ since 
$d^{i,j}\in  \xi_{l} \mathcal{D}_{\gamma-1}$ in the first case and $d^{i,j} \in  (1 - \xi_{l}) \mathcal{D}_{\gamma-1}$ in the second case. This immediately implies that given some arbitrary real vectors $q_{i,j}$ and $f$, minimizing $\sum\limits_{i = 1,..,m;j = 1, 2, 3} q_{i,j}^T d^{i,j} +   \sum\limits_{l \in \mathcal{L_{+}}} f_l \xi_l$ is equivalent to minimizing  
$\sum\limits_{l \in \mathcal{L_{+}}} \xi_l \left(f_l + \sum\limits_{i,j: l_{i,j} = l } \min\limits_{d^{i,j}\in   \mathcal{D}_{\gamma-1}} q_{i,j}^T d^{i,j} - \sum\limits_{i,j: l_{i,j} = \lnot l } \min\limits_{d^{i,j}\in   \mathcal{D}_{\gamma-1}} q_{i,j}^T d^{i,j} \right)$. It is then clear that optimal $\xi_l$ values will be either $0$ or $1$.  Since this holds for an arbitrary linear objective function, we get the wanted result about extreme points. \\

Let us now show that $val(\varphi) < \rho  \implies cong(f_\gamma(\varphi) ) \leq 1$.
Assume that  $val(\varphi) < \rho$.  We prove by induction that the congestion of $(G_\gamma,\mathcal{H}_\gamma,\mathcal{D}_\gamma)$ is $1$. Suppose that this is true for some $\gamma-1$. If $\xi_{l_{i,1}} = \xi_{l_{i,2}} = \xi_{l_{i,3}} = 0$ for some $i$, a flow of value $m^{\gamma-1}$ can be routed between $s_\gamma$ and $t_\gamma$ by sending a flow of value 1 on each edge of $G^{{i,j} }_{\gamma-1}$  for $j=1,2,3$. Since $val(\varphi) < \rho$, there are necessarily at least $m(1-\rho )$ such $i$, thus we can send the whole demand $m^{\gamma-1} m(1-\rho ) = m^{\gamma}(1-\rho )$ this way. For the indices $i,j$ such that $\xi_{l_{i,j}} = 1$, by the induction  hypothesis ($cong(f_{\gamma-1}(\varphi) ) \leq 1$), the demands inside $G^{{i,j} }_{\gamma-1}$ can be routed without sending more than one unit of flow on each edge of $G^{{i,j} }_{\gamma-1}$. 

Notice that to show that all traffic vectors  of $\mathcal{D}_{\gamma}$ can be routed with congestion $1$, we considered demand vectors corresponding with $\{0,1\}$ $\xi$ variables. The result shown above  about extreme points is useful here since it allows us to say that each extreme point of $\mathcal{D}_{\gamma}$ can be routed with congestion less than or equal to $1$ implying that each demand vector inside $\mathcal{D}_{\gamma}$ can also be routed with congestion less than or equal to $1$. \\

Let us now show that
$val(\varphi) = 1 \implies  cong(f_\gamma(\varphi) ) \geq 1 + \gamma(1 -\rho)$.
We are going to use induction  to build  a cut $\delta(C_\gamma)$ where $C_\gamma$ is set of vertices of $V(G_\gamma)$ containing $s_\gamma$ and not containing $t_\gamma$. The number of edges of  the cut will be $m^\gamma$ and each edge has a capacity equal to $1$. We also show the existence of a demand vector $d \in \mathcal{D}_\gamma$   such that the sum of the demands traversing the cut is greater than or equal to $m^\gamma(1 + \gamma(1- \rho) )$. 
This would show that there is at least one edge that carries at least $1 + \gamma(1- \rho)$ units of flow. 

Since $\varphi$ is satisfiable, there is a truth assignment represented by $\xi$ variables  (the auxiliary variables) such that for each $i=1,...,m$ there is a $j(i)$ such that $\xi_{l_{i,j(i)}} = 1$.
 By considering the graph $G_{\gamma-1}^{{i,j(i)} }$ and using the induction hypothesis, we can build a cut $\delta(C^{i}_{\gamma-1})$ separating the node $s(e_{i,j(i)})$ and  $t(e_{i,j(i)})$   and  containing $m^{\gamma-1}$ edges. We also build a demand vector $d^{{i,j(i)}} \in \mathcal{D}_{\gamma-1}$ such that the sum of demands traversing the cut is greater than or equal to $m^{\gamma-1} (1 + (\gamma-1)(1-\rho ) )$ (still possible by induction). By taking the union of these $m$ disjoint cuts we get a cut $\delta(C_\gamma)$ that is separating $s_\gamma$ and $t_\gamma$ having the required number of edges. A demand vector $d$ can be built by combining the vectors $d^{{i,j(i)}}$ and the demand $d_{h_0, \gamma}$ taken equal to $m^{\gamma}(1 - \rho) $. Since the demand from $s_\gamma$ to $t_\gamma$ is also traversing the cut, the total demand through $\delta(C_\gamma)$ is greater than or equal to $m^\gamma (1 - \rho) + m . m^{\gamma-1} (1 + (\gamma-1)(1-\rho ) ) = m^\gamma( 1 + \gamma ( 1-\rho ) ) $.  
\end{proof}

Lemma \ref{lemmaInapproximability} can be further exploited in different ways since there are many possible connections between the value $1 + \gamma (1 - \rho)$ and the characteristics of the undirected graph built in the proof of the lemma.  Observe, for example, that by a simple induction we get that the number of vertices $|V(G_\gamma)| = 2 + 2m \frac{(3m)^{\gamma} - 1}{3m - 1}$ leading to $|V(G_\gamma)|  \simeq 2 \times 3^{\gamma - 1} m^{\gamma}$ (when $m$ goes to infinity). 
We also have $\Delta(G_\gamma)$ equal to $m^{\gamma }$ where $\Delta(.)$ denotes the maximum degree in the graph. Consequently, $log(\frac{|V(G_\gamma)|}{\Delta(G_\gamma)}) \simeq \gamma  \log 3  + \log 2/3 $. Then by taking any constant $k$ such that $k \times \log 3 < (1 - \rho)$ where $\rho$ is the constant in the PCP Theorem we get a lower bound of the approximability ratio. 
This is stated in the following corollary.
\begin{corollary}
\label{coro:vdelta}
Under  conditions of Theorem~\ref{inapproximabilityPNP}, for any constant $k < \frac{1 - \rho}{\log 3 }$, it is not possible to approximate the minimum congestion problem in polynomial time within a ratio of $k \log \frac{|V(G)|}{\Delta(G)} $.
\end{corollary}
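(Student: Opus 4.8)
The plan is to argue by contradiction, reusing the reduction $f_\gamma$ of Lemma~\ref{lemmaInapproximability} but now allowing the approximation ratio to grow with the instance. Suppose that for some constant $k < \frac{1-\rho}{\log 3}$ there were a polynomial-time algorithm that, on every instance with graph $G$, returns a solution whose congestion is within a factor $R_\gamma := k\log\frac{|V(G)|}{\Delta(G)}$ of the optimum. I would fix once and for all a suitable constant exponent $\gamma$ (depending only on $k$ and $\rho$), map an arbitrary Gap-$3$-SAT formula $\varphi$ with $m$ clauses to $f_\gamma(\varphi)$ in polynomial time, run the assumed algorithm on it, and read off from the returned value whether $val(\varphi)=1$ or $val(\varphi)<\rho$. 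Since $\gamma$ is a constant, $|V(G_\gamma)|$ and the size of $\mathcal{D}_\gamma$ stay polynomial in $m$, so the whole procedure is polynomial; a correct decision would solve Gap-$3$-SAT and force $P=NP$.

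The crux is the choice of $\gamma$. On $f_\gamma(\varphi)$ the algorithm guarantees ratio $R_\gamma = k\log\frac{|V(G_\gamma)|}{\Delta(G_\gamma)}$, whereas Lemma~\ref{lemmaInapproximability} separates the two cases by a factor $1+\gamma(1-\rho)$. I therefore need $R_\gamma < 1+\gamma(1-\rho)$, and this must hold \emph{uniformly} in $m$, not merely asymptotically. Using $\Delta(G_\gamma)=m^\gamma$ together with the exact count $|V(G_\gamma)|=2+2m\frac{(3m)^\gamma-1}{3m-1}$, I would first turn the limiting estimate $\log\frac{|V(G_\gamma)|}{\Delta(G_\gamma)}\simeq \gamma\log 3+\log\frac{2}{3}$ into an honest inequality valid for every $m\ge 1$. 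Indeed, since $3m-1\ge 2m$ one gets $2m\frac{(3m)^\gamma-1}{3m-1}\le (3m)^\gamma$, hence $\frac{|V(G_\gamma)|}{\Delta(G_\gamma)}\le 2+3^\gamma\le 2\cdot 3^{\gamma}$, so that $R_\gamma\le k(\log 2+\gamma\log 3)$. The desired inequality then reduces to $\gamma\bigl(k\log 3-(1-\rho)\bigr) < 1-k\log 2$; because $k\log 3-(1-\rho)<0$ by hypothesis, its left-hand side tends to $-\infty$ with $\gamma$, so any sufficiently large constant $\gamma$ works, independently of $m$.

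With $\gamma$ so fixed, the decision step is the standard gap argument already used for Theorem~\ref{inapproximabilityPNP}. Let $\tilde\beta$ denote the value returned by the algorithm, so that $cong(f_\gamma(\varphi))\le \tilde\beta\le R_\gamma\,cong(f_\gamma(\varphi))$. On a YES instance ($val(\varphi)=1$) the optimum satisfies $cong(f_\gamma(\varphi))\ge 1+\gamma(1-\rho)$, whence $\tilde\beta\ge 1+\gamma(1-\rho)$. On a NO instance ($val(\varphi)<\rho$) the optimum is at most $1$, so $\tilde\beta\le R_\gamma\cdot 1 = R_\gamma < 1+\gamma(1-\rho)$. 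Thus testing whether $\tilde\beta\ge 1+\gamma(1-\rho)$ distinguishes the two cases, which contradicts the NP-hardness of Gap-$3$-SAT granted by the PCP theorem unless $P=NP$.

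I expect the only genuine obstacle to be this uniformity in $m$: the relation $\log\frac{|V(G_\gamma)|}{\Delta(G_\gamma)}\simeq \gamma\log 3+\log\frac{2}{3}$ is stated as an $m\to\infty$ limit, and the argument collapses if $R_\gamma$ could exceed $1+\gamma(1-\rho)$ for some small $m$. Replacing the limit by the explicit bound $\frac{|V(G_\gamma)|}{\Delta(G_\gamma)}\le 2\cdot 3^{\gamma}$ removes this difficulty and simultaneously confirms that the threshold $\frac{1-\rho}{\log 3}$ on $k$ is the best this particular family can yield, since $\log\frac{|V(G_\gamma)|}{\Delta(G_\gamma)}$ also grows like $\gamma\log 3$ from below; everything else is the same threshold computation as in the proof of Theorem~\ref{inapproximabilityPNP}.
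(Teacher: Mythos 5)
Your proposal is correct and follows essentially the same route as the paper: it reuses the family $f_\gamma(\varphi)$, the counts $|V(G_\gamma)|=2+2m\frac{(3m)^\gamma-1}{3m-1}$ and $\Delta(G_\gamma)=m^\gamma$, the observation that the gap $1+\gamma(1-\rho)$ grows with slope $1-\rho$ while the permitted ratio grows with slope $k\log 3<1-\rho$, and then the same gap-decision argument as in Theorem~\ref{inapproximabilityPNP}. Your only addition is replacing the paper's asymptotic estimate $\log\frac{|V(G_\gamma)|}{\Delta(G_\gamma)}\simeq\gamma\log 3+\log\frac{2}{3}$ by the explicit bound $\frac{|V(G_\gamma)|}{\Delta(G_\gamma)}\le 2\cdot 3^{\gamma}$ valid for all $m$, which is a welcome tightening of the same argument rather than a different one.
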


  \section{ A $\Omega(\frac { \log n}{\log \log n}) $ approximability lower bound }
  
\textcolor{black}{ To get an approximability lower bound, we will use the well-known ETH conjecture that is recalled below.  }
\begin{conjecture}[Exponential Time Hypothesis]
\label{ETH}
\cite{ethImpagliazzo1,ethImpagliazzo2} There is a  constant $ \delta$ such that no algorithm can solve 3-SAT instances  in time $O( 2^{\delta  m })$, where m is the number of clauses.
\end{conjecture}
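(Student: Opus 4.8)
The statement labeled here is the Exponential Time Hypothesis, which is a \emph{conjecture}, not a result the paper sets out to establish. Consequently there is no proof to sketch in the ordinary sense: ETH is assumed as a hypothesis and invoked downstream to obtain conditional lower bounds (such as the $\Omega(\frac{\log n}{\log\log n})$ bound that follows in this section). The appropriate ``plan'' is therefore to recognize ETH as an axiom-like input and to ensure that every consequence derived from it is stated conditionally on it. If one nonetheless asks how ETH \emph{could} be proved, the first observation is that it is strictly stronger than $P \neq NP$: were $P = NP$, then $3$-SAT would be solvable in polynomial time and hence in time $O(2^{\delta m})$ for every $\delta > 0$, contradicting the conjecture. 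Thus any proof of ETH must in particular separate $P$ from $NP$, and must do strictly more, since it asserts a quantitative lower bound of the form $2^{\Omega(m)}$ rather than a merely super-polynomial one.

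The natural route to such a bound would be to exhibit, for some explicit $\delta > 0$, an unconditional statement that no algorithm decides satisfiability of $3$-CNF formulas in time $O(2^{\delta m})$. The hard part, and the reason no such proof is known, is that every available method for proving lower bounds against general computation runs into the three documented barriers: relativization (Baker--Gill--Solovay), the natural-proofs barrier of Razborov--Rudich (under standard cryptographic assumptions), and algebrization (Aaronson--Wigderson). A proof of ETH would have to circumvent all three simultaneously, and no technique capable of doing so is presently available. This obstacle is hard in a precise sense: it subsumes the resolution of $P$ versus $NP$.

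A second, more delicate obstacle is that ETH pins down the \emph{base} of the exponential. Even granting $P \neq NP$, one would still have to rule out subexponential algorithms running in time $2^{o(m)}$, which is exactly the content of the weaker ``subexponential-time'' hypothesis, itself open. I would note, for the purpose of \emph{using} the conjecture, that the Sparsification Lemma of Impagliazzo--Paturi--Zane makes the formulation in terms of the number of clauses $m$ equivalent (up to the value of $\delta$) to the formulation in terms of the number of variables, which is convenient for the reduction to come but brings a proof no closer. I would therefore not attempt to prove this statement; I would instead flag it clearly as a hypothesis, cite the original sources, and make explicit that all lower bounds built on it in the remainder of the section inherit its conditional status.
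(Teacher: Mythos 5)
You are correct: the paper offers no proof of this statement because it is the Exponential Time Hypothesis, stated as a conjecture and cited from Impagliazzo et al., then used only as a hypothesis for the conditional $\Omega\left(\frac{\log n}{\log\log n}\right)$ lower bound of Theorem~\ref{log/loglog}. Your treatment---flagging it as an axiom-like input whose consequences remain conditional---matches the paper's handling exactly, and your remarks on its relation to $P \neq NP$ and the sparsification lemma are accurate context rather than gaps.
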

Let us use $n$ to denote the number of vertices of the graph.

\begin{theorem}
Under Conjecture \ref{ETH},  there exists a constant $k$ such that no polynomial-time algorithm can solve the minimum congestion problem with the approximation ratio $k \frac{\log  n }{\log \log n}$. 
\label{log/loglog}
\end{theorem}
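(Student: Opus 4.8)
The plan is to reuse the family of reductions $f_\gamma$ from Lemma~\ref{lemmaInapproximability}, but now to let the recursion depth $\gamma$ grow with the input size instead of treating it as a fixed constant. As in the proof of Theorem~\ref{inapproximabilityPNP}, I would first pass from plain 3-SAT to Gap-3-SAT via the PCP theorem, so that the formula $\varphi$ fed into $f_\gamma$ satisfies the promise $val(\varphi)=1$ or $val(\varphi)<\rho$. Writing $m_0$ for the number of clauses of the original 3-SAT instance $\psi$, the quasi-linear version of the PCP theorem yields a Gap-3-SAT formula $\varphi$ with $m = O(m_0\,\mathrm{polylog}\,m_0)$ clauses such that $val(\psi)=1\Rightarrow val(\varphi)=1$ and $val(\psi)<1\Rightarrow val(\varphi)<\rho$. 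Applying $f_\gamma$ to $\varphi$ then produces a congestion instance on $n=|V(G_\gamma)|=\Theta\bigl((3m)^\gamma\bigr)$ vertices whose polytope has size $O(m^{c\gamma})$, carrying the gap $cong\ge 1+\gamma(1-\rho)$ versus $cong\le 1$ inherited from the lemma.

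The quantitative heart of the argument is to choose $\gamma$ as a fractional power of $m$, concretely $\gamma=\lceil\sqrt{m}\,\rceil$. With this choice I expect $\log n = \gamma\log(3m)(1+o(1)) = \Theta(\gamma\log m)$ and, crucially, $\log\log n = \log\gamma+\log\log(3m) = \tfrac12\log m\,(1+o(1)) = \Theta(\log m)$, so that $\frac{\log n}{\log\log n}\sim 2\gamma=\Theta(\gamma)$. Since the gap guaranteed by the lemma is also of order $\gamma$, namely $1+\gamma(1-\rho)$, a sufficiently small absolute constant $k$ (any $k<\tfrac{1-\rho}{2}$ works with $\gamma=\lceil\sqrt m\,\rceil$) makes $k\frac{\log n}{\log\log n}<1+\gamma(1-\rho)$ for all large $m$. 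Exactly as in the proof of Theorem~\ref{inapproximabilityPNP}, a polynomial-time $k\frac{\log n}{\log\log n}$-approximation would then return a value that separates the two cases $val(\varphi)=1$ and $val(\varphi)<\rho$, and hence decide Gap-3-SAT and, through the PCP reduction, the original formula $\psi$.

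It remains to check the running time against the ETH. The assumed approximation runs in time polynomial in the instance size, which is dominated by the polytope size $O(m^{c\gamma})$, so the whole procedure (PCP reduction, construction of $f_\gamma(\varphi)$, and one call to the approximation) takes $m^{O(\gamma)}=2^{O(\gamma\log m)}$ time. With $\gamma=\lceil\sqrt m\,\rceil$ and $m=O(m_0\,\mathrm{polylog}\,m_0)$ this is $2^{O(\sqrt m\,\log m)}=2^{o(m_0)}$, which for large $m_0$ is $O(2^{\delta m_0})$ for the constant $\delta$ of Conjecture~\ref{ETH} and therefore contradicts the ETH, completing the argument.

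I expect the delicate part to be the simultaneous bookkeeping of two size budgets: $\gamma$ must be large enough, a genuine power of $m$, for $\frac{\log n}{\log\log n}$ to be of order $\gamma$ rather than of the smaller order $\frac{\gamma\log m}{\log m}$ one would naively fear, yet small enough that $\gamma\log m=o(m_0)$ keeps the total running time strictly below the ETH threshold even after the polylogarithmic blow-up of the PCP step. Choosing $\gamma$ to be a fractional power of $m$ reconciles these requirements with room to spare; indeed any fixed exponent in $(0,1)$ would do, and I fix $\gamma=\lceil\sqrt m\,\rceil$ only for concreteness.
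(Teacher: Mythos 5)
Your proposal is correct, and its skeleton is the same as the paper's: feed a Gap-3-SAT instance into the reduction $f_\gamma$ of Lemma~\ref{lemmaInapproximability} with a recursion depth $\gamma$ growing like a power of $m$, note that the promise gap $1+\gamma(1-\rho)$ and the quantity $\frac{\log n}{\log\log n}$ both grow like $\Theta(\gamma)$, and turn a hypothetical $k\frac{\log n}{\log\log n}$-approximation into a subexponential algorithm for 3-SAT, contradicting Conjecture~\ref{ETH}. The one genuine difference is the PCP ingredient and the resulting bookkeeping. The paper uses only the standard polynomial-blowup PCP theorem: combined with ETH this gives that Gap-3-SAT requires time more than $O(2^{m^{\beta}})$ for some unspecified constant $\beta>0$, and the depth is then tuned to that exponent, $\gamma(m)=\frac{m^{\beta}}{c_3\log m}$, yielding the constant $k<\beta(1-\rho)$. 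You instead invoke the quasi-linear-size PCP (Dinur), which lets you track the original clause count $m_0$, fix the concrete depth $\gamma=\lceil\sqrt{m}\,\rceil$, and get the explicit constant $k<\frac{1-\rho}{2}$; the price is a stronger (though standard and true) black box than the bare PCP theorem the paper relies on. Do note that your closing claim that ``any fixed exponent in $(0,1)$ would do'' is valid only because of the quasi-linear PCP: with a polynomial-size PCP of blowup $m=m_0^{c'}$ the exponent must stay below $1/c'$, which is exactly why the paper expresses $\gamma$ in terms of the unknown $\beta$ rather than as a fixed root of $m$. The paper itself remarks that the quasi-linear hardness bound ($O(2^{m/\log^c m})$ for Gap-3-SAT) would not improve the $\Omega(\frac{\log n}{\log\log n})$ lower bound, which is consistent with your account: it buys a cleaner constant and simpler parameter choice, not a better asymptotic ratio.
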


\begin{proof}
The combination of PCP Theorem 
and ETH Conjecture \ref{ETH} implies that distinguishing between 3-SAT instances such that $val( \varphi ) < \rho $ and $val( \varphi ) = 1$ cannot be done in time $O( 2^{ m^{\beta } } )$ for some constant $\beta > 0$ (a slightly better bound is $O(2^{ {m}/{\log^c m }})$ for some constant $c$, but this will not help us to improve the lower bound of Theorem \ref{log/loglog}).

Suppose that there is an algorithm that solves the minimum congestion problem with an approximation factor $\alpha(n)$ and a running time $O(n^{ c_1 } )$.  Given a 3-SAT instance and a function $ \gamma : \mathbb{N} \xrightarrow{} \mathbb{N} $ we can construct a minimum congestion instance $ f_{ \gamma ( m ) }( \varphi )  $ as in Lemma \ref{lemmaInapproximability} in
time $O(m^{ c_2 \gamma(m) }) $ and where the number of vertices of the instance is
$m^{  \gamma(m) }$. Then by running the approximation algorithm for minimum congestion we get a total time of $ O( m^{ c_3 \gamma(m) } ) $ where $c_3 = \max\{c_1, c_2\}$. 
Thus by choosing  $\gamma(m) = \frac{m^\beta }{ c_3  \log  m  }$ we get an algorithm that runs in time $O( 2^{  m^{ \beta }      } )$. And if the approximation factor $\alpha(n)$ is small enough, that is if   $ \alpha ( m^{ \gamma(m)} ) < 1 + (1 - \rho) \gamma(m)$ for a big enough $m$, we get an algorithm solving Gap-3-SAT and thus contradicting Conjecture \ref{ETH}. This is the case for $k \frac{\log n }{\log \log n }$ for some constant $k$.  To see this, we can observe that:

$\frac{ 1 + (1 -\rho) \gamma(m) }{\alpha ( m^{  \gamma(m)} ) }
= \frac{ 1 + (1 -\rho) \frac{m^\beta }{ c_3 \log m } }{  k \frac{m^\beta / c_3}{\beta \log m - \log c_3 }  }  \simeq \frac{\beta(1 - \rho)}{k }$.
By taking $k <{\beta(1 - \rho)}$ we get the wanted inapproximability result. 
\end{proof}

\section{From minimum congestion to linear costs}
\label{sec:lagrange} 

Given any $ \lambda \geq 0$, the robust network design problem with linear costs is simply the following \textcolor{black}{where $\mathcal{U}(\mathcal{D})$ is the set of possible capacity vectors defined in Section \ref{sec:reduction}}:
\begin{equation}
\min\limits_{u \in \mathcal{U}(\mathcal{D} )  } \lambda^T u. 
\label{linearCost}
\end{equation}

Assume that there exists a number $\alpha \geq 1$ such that Problem \eqref{linearCost} can be solved in polynomial time within an approximation ratio $\alpha$.  More precisely, we have 
a polynomial-time oracle that takes as input a non-negative  linear cost $ \lambda \in \mathbb{R}^{ E(G) }$ and outputs a $ u^{ap}( \lambda )  \in \mathcal{U}( \mathcal{ D } )$
such that $ \lambda^T u(\lambda )  \leq \lambda^T u^{ap}(\lambda ) \leq \alpha \lambda^T u( \lambda ) $ where $ u(\lambda ) \in \mathcal{U}(\mathcal{D} )$ is the optimal solution of \eqref{linearCost}.

Recall that the congestion problem is given by
{\color{black}
\begin{align}
\label{congestionPb}
&\min\limits_{   \beta , u } \beta \\
 & u_e \leq c_e \beta,  \forall e \in E(G) \nonumber   \\
 & u \in \mathcal{U}(\mathcal{D}) \nonumber
\end{align}
where $\beta$ and $u$ are optimization variables.
}

Let us consider a Lagrange relaxation of \eqref{congestionPb} by dualizing the capacity constraints and using $\lambda$ for the dual multipliers.  The dual problem is then given by $ \max\limits_{\lambda \geq 0} \min\limits_{\beta , u \in \mathcal{U}(\mathcal{D} )  } \beta + \sum_{e \in E(G)} \lambda_e (u_e - \beta c_e) $ (where $\beta$ is an optimization variable). 
{\color{black} If $\lambda$ is chosen such $\sum\limits_{e} \lambda_e c_e \neq 1 $, then the value of the inner minimum  would be $-\infty$. Thus in an optimal solution, we will always have $\sum\limits_{e} \lambda_e c_e = 1 $. The problem is then equivalent to:  }

\begin{equation}
\label{dualPb}
\max\limits_{\mycom{\lambda \geq 0}{\sum\limits_{e \in E(G)} \lambda_e c_e = 1 }} \min\limits_{u \in \mathcal{U}(\mathcal{D} )  } \sum_{e \in E(G)} \lambda_e u_e =  \max\limits_{\mycom{\lambda \geq 0}{\sum\limits_{e \in E(G)} \lambda_e c_e = 1 }} \lambda^T u( \lambda ).  
\end{equation}
Since $\mathcal{U}(\mathcal{D} )$ is polyhedral and all constraints and the objective function are linear, there is 
 is no duality gap between \eqref{congestionPb} and \eqref{dualPb}. 

Observe that \eqref{dualPb} can be expressed as follows: 
\begin{subequations}
\label{suite:cong}
\label{congestionDual}
\begin{align}
 & \max\limits_{ \beta, \lambda \geq 0 } \beta \\
 \beta & \leq \sum\limits_{ e \in E(G) }   \lambda_e u_e , \forall u \in  \mathcal{U} ( \mathcal{D} )   \label{cont:dual} \\
  1 &=  \sum\limits_{ e \in E(G) } \lambda_e c_e
\end{align}
\end{subequations}

We are going to approximately solve \eqref{congestionDual} using a cutting-plane algorithm where inequalities \eqref{cont:dual} are iteratively added by using the $\alpha$-approximation oracle. 
Let $ ( { \beta }' , {\lambda }' )$ be a potential solution of (\ref{congestionDual}), we can run the $\alpha$-approximation of robust network design problem (\ref{linearCost}) with the cost vector ${\lambda}'$ to get a solution ${ u^{ap}(\lambda') }$. 
If $ {\beta}' > \sum\limits_{ e \in E(G) }  {\lambda}'_e  u^{ap}_e(\lambda') $ we return the inequality $ {\beta}  \leq  \sum\limits_{ e \in E(G) }  {\lambda}_e  u^{ap}_e(\lambda') $,  otherwise the algorithm stops and returns $ ( { \beta }' , {\lambda }' )$. 
We know from the separation-optimization equivalence theorem~\cite{GroetschelLovaszSchrijver1988}  that  \eqref{congestionDual} can be solved by making a polynomial number of calls to the separation oracle leading a globally polynomial-time algorithm. Notice that this happens if the separation oracle is exact. In our case, the oracle is only an approximate one, implying that the cutting plane algorithm might be   prematurely  interrupted before obtaining the true optimum of \eqref{congestionDual}. Observe however that this implies that the computing time is polynomially bounded. 
Let  $ ( \Tilde{ \beta } , \Tilde{\lambda })$ be the solution returned by the cutting-plane algorithm.   Let  $( { \beta }^* , {\lambda }^*)$ be the true optimal solution of \eqref{congestionDual}.  The next lemma states that the returned solution is an $\alpha$-approximation of the optimal solution.
\begin{lemma} The cutting-plane algorithm computes in polynomial time a solution $\Tilde{ \beta }$ satisfying: 
\begin{equation}
\beta^* \leq \Tilde{ \beta } \leq \alpha \beta^*.
\label{eq:approx}
\end{equation}
\label{lem:approx}
\end{lemma}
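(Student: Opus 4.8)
The plan is to establish the two inequalities of \eqref{eq:approx} separately, resting on a single structural observation: although the separation oracle is only approximate, every cut it generates is a \emph{genuine} valid inequality of \eqref{congestionDual}. Indeed, each returned inequality has the form $\beta \leq \sum_{e \in E(G)} \lambda_e u^{ap}_e(\lambda')$ with $u^{ap}(\lambda') \in \mathcal{U}(\mathcal{D})$, so it is literally one of the constraints \eqref{cont:dual} (the one indexed by the feasible capacity vector $u^{ap}(\lambda')$). Consequently, the master program the cutting-plane algorithm actually solves at each iteration is a \emph{relaxation} of \eqref{congestionDual}: it keeps the normalization $\sum_e \lambda_e c_e = 1$ together with $\lambda \geq 0$, but only the finite subset of the constraints \eqref{cont:dual} accumulated so far.

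For the lower bound $\beta^* \leq \Tilde{\beta}$, I would argue as follows. The algorithm halts exactly when the current master optimum $(\Tilde{\beta},\Tilde{\lambda})$ is no longer separated, so $\Tilde{\beta}$ is the optimal value of the final relaxed master. Since that relaxed master maximizes the same objective $\beta$ over a feasible region that is a superset of the feasible region of \eqref{congestionDual} (fewer constraints from \eqref{cont:dual}), its optimal value can only exceed that of the exact problem, giving $\Tilde{\beta} \geq \beta^*$.

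For the upper bound $\Tilde{\beta} \leq \alpha \beta^*$, I would combine the stopping condition with the approximation guarantee. At termination the pair $(\Tilde{\beta},\Tilde{\lambda})$ is not separated, i.e. $\Tilde{\beta} \leq \sum_e \Tilde{\lambda}_e u^{ap}_e(\Tilde{\lambda}) = \Tilde{\lambda}^T u^{ap}(\Tilde{\lambda})$. The $\alpha$-approximation property of the oracle then yields $\Tilde{\lambda}^T u^{ap}(\Tilde{\lambda}) \leq \alpha\, \Tilde{\lambda}^T u(\Tilde{\lambda}) = \alpha \min_{u \in \mathcal{U}(\mathcal{D})} \Tilde{\lambda}^T u$. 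Finally, because the master always enforces $\Tilde{\lambda} \geq 0$ and $\sum_e \Tilde{\lambda}_e c_e = 1$, the vector $\Tilde{\lambda}$ is feasible for the outer maximization in \eqref{dualPb}, whence $\min_{u} \Tilde{\lambda}^T u \leq \beta^*$. Chaining these three facts gives $\Tilde{\beta} \leq \alpha \beta^*$.

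The polynomial running time is not a new argument: each iteration makes one call to the approximate oracle, the number of iterations is polynomially bounded by the separation-optimization equivalence, and premature termination only shortens the run. The step I expect to require the most care is the lower bound, where one must be sure the ``relaxation'' interpretation is legitimate, i.e. that the approximate oracle never introduces a cut that would incorrectly shrink the feasible region below the true one; the feasibility $u^{ap}(\lambda') \in \mathcal{U}(\mathcal{D})$ is precisely what rules this out, and it is the only place where the approximation of the oracle is \emph{not} allowed to degrade the bound. One minor technical point to dispatch is that the master is bounded at termination, which holds because the accumulated cuts (at least one is present once the algorithm has iterated) bound $\beta$ together with the normalization constraint.
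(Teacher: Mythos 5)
Your proposal is correct and follows essentially the same argument as the paper: the same four inequalities (the stopping condition $\Tilde{\beta} \leq \Tilde{\lambda}^T u^{ap}(\Tilde{\lambda})$, the oracle guarantee $\Tilde{\lambda}^T u^{ap}(\Tilde{\lambda}) \leq \alpha\,\Tilde{\lambda}^T u(\Tilde{\lambda})$, the feasibility of $\Tilde{\lambda}$ in \eqref{dualPb} giving $\Tilde{\lambda}^T u(\Tilde{\lambda}) \leq \beta^*$, and the relaxation argument giving $\Tilde{\beta} \geq \beta^*$), merely organized into two explicit bounds rather than chained at the end. Your added remark that every generated cut is a genuine constraint of \eqref{congestionDual} because $u^{ap}(\lambda') \in \mathcal{U}(\mathcal{D})$ is exactly the implicit justification behind the paper's closing step.
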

\begin{proof}
Observe that $\beta^* = {\lambda^*}^T  u(\lambda^*)$. Moreover, since \eqref{congestionDual} is equivalent to \eqref{dualPb}, we get that
$
{\lambda^*}^T  u(\lambda^*) =  \beta^* \geq \Tilde{\lambda}^T u( \Tilde{\lambda} )
$.
From the approximation factor of the oracle, one can write that  
$\Tilde{\lambda}^T u^{ap }( \Tilde{\lambda} ) \leq \alpha \Tilde{\lambda}^T  u(\Tilde{\lambda}).
$
Using the fact that no inequalities can be added for   $ ( \Tilde{ \beta } , \Tilde{\lambda })$, we get that
$
\Tilde{\beta}  \leq  \Tilde{\lambda}^T  u^{ap }( \Tilde{\lambda} )$.
Finally, since $(\beta^*,\lambda^*)$ is feasible for  \eqref{congestionDual}, we obviously have
$\Tilde{\beta} \geq  \beta^*   $.
Combining the $4$ previous inequalities leads to \eqref{eq:approx}. 
\end{proof}

The above lemma has many consequences. 
\begin{theorem}
{\color{black} Unless $P=NP$,} the robust network design problem  with linear costs cannot be approximated {\color{black} in polynomial time} within any constant ratio. {\color{black} Unless the ETH conjecture is false, the robust network design problem  with linear costs cannot be approximated within  $\Omega( \frac{\log n}{ \log \log n} )$. }
\end{theorem}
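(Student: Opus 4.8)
The plan is to argue by contraposition, using Lemma~\ref{lem:approx} as a black box that transports any approximation algorithm for the linear cost problem into one for minimum congestion with the \emph{same} ratio, and then to invoke the two inapproximability results already established for congestion in Theorem~\ref{inapproximabilityPNP} and Theorem~\ref{log/loglog}.

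First I would suppose, towards a contradiction, that there is a polynomial-time algorithm solving Problem~\eqref{linearCost} within ratio $\alpha$ (a constant for the first claim, and $\alpha = \alpha(n)$ for the second). I would feed this algorithm as the approximate separation oracle into the cutting-plane procedure of Section~\ref{sec:lagrange}. By Lemma~\ref{lem:approx} this produces, in polynomial time, a value $\tilde{\beta}$ with $\beta^* \leq \tilde{\beta} \leq \alpha \beta^*$, where $\beta^*$ is the optimal congestion of the \emph{same} instance $(G,\mathcal{H},\mathcal{D})$. Hence $\tilde{\beta}$ is an $\alpha$-approximation of the minimum congestion. Crucially, the underlying graph $G$ (and therefore the number of vertices $n$) is left unchanged by the reduction, so the ratio $\alpha$ is measured against the same $n$ in both problems.

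For the first statement I would then take $\alpha$ to be any fixed constant: the resulting polynomial-time constant-factor approximation for minimum congestion contradicts Theorem~\ref{inapproximabilityPNP}, so no such algorithm for the linear cost can exist unless $P=NP$. For the second statement I would take $\alpha(n) = k\,\frac{\log n}{\log\log n}$, with $k$ the constant furnished by Theorem~\ref{log/loglog}: the induced approximation for congestion would then violate Theorem~\ref{log/loglog}, and hence contradict the ETH conjecture.

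I do not anticipate a genuine obstacle here, since all the technical content lives in Lemma~\ref{lem:approx} and in Theorems~\ref{inapproximabilityPNP} and \ref{log/loglog}; the present statement is a clean corollary. The one point that needs care is checking that the Lagrange-based reduction is both ratio-preserving and size-preserving: the oracle is invoked only on non-negative cost vectors $\lambda$ over the edges of the fixed graph $G$, so every call concerns the same $n$, and the guarantee $\alpha$ passes through the two-sided bound $\beta^* \leq \tilde{\beta} \leq \alpha \beta^*$ of Lemma~\ref{lem:approx} without any loss. This is exactly what guarantees that the contradiction is drawn against the correct parameter $n$ in both the $P\neq NP$ and the ETH regimes.
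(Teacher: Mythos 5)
Your proposal is correct and follows exactly the paper's own argument: the paper proves this theorem as an immediate consequence of Theorems~\ref{inapproximabilityPNP} and \ref{log/loglog} combined with Lemma~\ref{lem:approx}, which is precisely the contraposition you spell out. Your additional check that the Lagrange-based reduction preserves both the approximation ratio and the instance size (hence the parameter $n$) is a useful, if implicit, detail of the same proof rather than a different route.
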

\begin{proof}
The result is an immediate consequence of Theorem{\color{black}s} \ref{inapproximabilityPNP}{\color{black}, \ref{log/loglog}} and Lemma \ref{lem:approx}.
\end{proof}
The theorem above answers a long-standing open question of~\cite{Chekuri2007survey}. All other inapproximability results  proved for the congestion problem directly hold for the robust network design problem with linear cost.  

Another important consequence is that the congestion problem can be approximated within $O(\log n)$.  This result was already proved in~\cite{obliviousRackeH2008}  using other techniques. In our case, the result is an immediate consequence of the $O(\log n)$-approximation algorithm for the robust network design problem with linear cost provided by~\cite{personalCommunication2004,GUPTA20113} and fully described in~\cite{Chekuri2007survey,goyal2009}. 
\begin{theorem}~\cite{obliviousRackeH2008} 
 Congestion  can be approximated within $O(\log n)$.
\end{theorem}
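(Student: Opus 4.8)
The plan is to obtain the $O(\log n)$-approximation for congestion as a direct corollary of two facts that are already available: the $O(\log n)$-approximation for the \emph{linear} cost version of robust network design, and the Lagrange/cutting-plane reduction of Lemma~\ref{lem:approx}, which promotes any polynomial-time $\alpha$-approximation oracle for problem~\eqref{linearCost} into an $\alpha$-approximation for the congestion problem.

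First I would take as a black box the $O(\log n)$-approximation for problem~\eqref{linearCost} attributed to Gupta and described in \cite{Chekuri2007survey,goyal2009,personalCommunication2004,GUPTA20113}. Its mechanism is to reduce dynamic routing to a single tree. One embeds $G$, with edge lengths given by the cost vector $\lambda$, into a probabilistic tree whose tree metric dominates the graph metric and has expected stretch $O(\log n)$ (e.g.\ via an FRT embedding). On a tree, dynamic routing is forced: each commodity $h$ uses the unique tree path between $s(h)$ and $t(h)$, so the reservation required on a tree edge $e_T$ is exactly the worst-case demand crossing the cut it induces, namely $\max_{d \in \mathcal{D}} \sum_{h \text{ crosses } e_T} d_h$, a quantity computable by a single linear program over $\mathcal{D}$. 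Because this tree reservation is simultaneously valid for all $d \in \mathcal{D}$, pulling it back to $G$ through the embedding produces a bona fide reservation vector $u^{ap}(\lambda) \in \mathcal{U}(\mathcal{D})$; the domination property guarantees that feasibility is preserved, while the $O(\log n)$ expected stretch bounds the cost blow-up, and a standard derandomization makes the oracle deterministic and polynomial-time.

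Then I would simply invoke Lemma~\ref{lem:approx} with $\alpha = O(\log n)$. The cutting-plane algorithm of Section~\ref{sec:lagrange} calls this oracle a polynomial number of times and returns $\Tilde{\beta}$ with $\beta^* \leq \Tilde{\beta} \leq \alpha \beta^*$. Since there is no duality gap between \eqref{congestionPb} and \eqref{dualPb}, $\beta^*$ is the optimal congestion, so $\Tilde{\beta}$ is an $O(\log n)$-approximation of it, which is exactly the claim.

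I expect the only genuine subtlety to lie at the interface between the two ingredients, namely verifying that the cited linear-cost algorithm really meets the hypotheses of Lemma~\ref{lem:approx}: it must return, in polynomial time, an \emph{actual} feasible reservation $u^{ap}(\lambda) \in \mathcal{U}(\mathcal{D})$ rather than merely an estimate of its value, and its cost must be within the claimed factor of the optimum. Feasibility comes for free because a tree routing is a particular dynamic routing and the reservation is set to the worst case over $\mathcal{D}$; the $O(\log n)$ factor relies on the embedding's low expected stretch, and the domination inequality ensures that re-expressing tree edges as paths in $G$ does not inflate the cost. Everything else, the LP over the polyhedral $\mathcal{D}$ and the polynomial number of commodities, is routine, so both the oracle and the outer loop run in polynomial time.
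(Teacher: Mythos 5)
Your proposal follows essentially the same route as the paper's own argument: the theorem is obtained by plugging the known $O(\log n)$-approximation oracle for the linear-cost problem~\eqref{linearCost} (Gupta's tree-based algorithm cited in \cite{personalCommunication2004,GUPTA20113,Chekuri2007survey,goyal2009}) into the Lagrange/cutting-plane reduction of Lemma~\ref{lem:approx}, which converts any polynomial-time $\alpha$-approximation for linear cost into an $\alpha$-approximation for congestion. The additional detail you supply on the tree-embedding mechanism and on the oracle returning a genuine reservation $u^{ap}(\lambda)\in\mathcal{U}(\mathcal{D})$ merely elaborates on what the paper treats as a black box, so the two proofs coincide.
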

Notice that Theorem \ref{log/loglog} tells us that the ratio  $O(\log n)$ is tight.

 \textcolor{black}{Starting from the results of  \cite{Maggs97,BARTAL199919} showing the existence of 
 instances for which the ratio $\frac{cong_{stat-frac}}{cong_{dyn-frac}}$ is $\Omega(\log n)$, one can also use the reduction above to prove, differently from \cite{goyal2009}, the existence of instances for which the ratio $\frac{{Lin}_{stat-frac}}{{Lin}_{dyn-frac}}$ is $\Omega(\log n)$ where a linear cost is minimized.  
 \begin{theorem}~\cite{goyal2009} 
 There are instances for which $\frac{{Lin}_{stat-frac}}{{Lin}_{dyn-frac}}$ is $\Omega(\log n)$.
 \label{the:gap}
\end{theorem}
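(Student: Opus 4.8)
The plan is to transport the known congestion gap into the linear-cost setting by reusing, almost verbatim, the Lagrange duality of Section~\ref{sec:lagrange}. The starting point is the existence, guaranteed by~\cite{Maggs97,BARTAL199919}, of robust network design instances on which $\frac{cong_{stat-frac}}{cong_{dyn-frac}} = \Omega(\log n)$. I would keep exactly the same graph, the same uncertainty polytope $\mathcal{D}$ and the same edge capacities $c_e$, and only \emph{choose the linear cost vector} $\lambda$ cleverly; since Theorem~\ref{the:gap} merely asserts the existence of bad instances, we are free to select $\lambda$.

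First I would observe that the derivation leading to~\eqref{dualPb} is not specific to dynamic routing: it only exploits that the set of admissible capacity vectors is convex and that the objective is linear. Writing $\mathcal{U}_{stat}(\mathcal{D})$ for the capacity vectors realizable by a static (oblivious) routing, this set is convex, because a convex combination of two routing templates is again a routing template serving every $d \in \mathcal{D}$ with capacities bounded by the same convex combination. Hence the identical Lagrangian argument, applied once with the dynamic set $\mathcal{U}(\mathcal{D})$ of Section~\ref{sec:reduction} and once with $\mathcal{U}_{stat}(\mathcal{D})$, yields for both models
\begin{equation*}
cong_{stat-frac} = \max\limits_{\mycom{\lambda \geq 0}{\sum_{e}\lambda_e c_e = 1}} {Lin}_{stat-frac}(\lambda), \qquad cong_{dyn-frac} = \max\limits_{\mycom{\lambda \geq 0}{\sum_{e}\lambda_e c_e = 1}} {Lin}_{dyn-frac}(\lambda),
\end{equation*}
where ${Lin}_{stat-frac}(\lambda) := \min_{u \in \mathcal{U}_{stat}(\mathcal{D})}\lambda^T u$ and ${Lin}_{dyn-frac}(\lambda) := \min_{u \in \mathcal{U}(\mathcal{D})}\lambda^T u$ denote the optimal linear reservation cost under each routing model.

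Next I would let $\lambda^{*}$ be a maximizer in the static identity, so that ${Lin}_{stat-frac}(\lambda^{*}) = cong_{stat-frac}$. For this very same $\lambda^{*}$ the dynamic identity gives ${Lin}_{dyn-frac}(\lambda^{*}) \leq cong_{dyn-frac}$, since $\lambda^{*}$ is just one admissible point in the maximization defining $cong_{dyn-frac}$. Evaluating the linear-cost ratio on the gap instance equipped with the cost vector $\lambda^{*}$ (and noting ${Lin}_{dyn-frac}(\lambda^{*})>0$ because $\sum_e\lambda^{*}_e c_e = 1$),
\begin{equation*}
\frac{{Lin}_{stat-frac}(\lambda^{*})}{{Lin}_{dyn-frac}(\lambda^{*})} \;=\; \frac{cong_{stat-frac}}{{Lin}_{dyn-frac}(\lambda^{*})} \;\geq\; \frac{cong_{stat-frac}}{cong_{dyn-frac}} \;=\; \Omega(\log n),
\end{equation*}
which is precisely the claimed linear-cost gap, on a graph with the same number of vertices $n$.

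I expect the only genuine obstacle to be the first step, namely justifying the static-routing analogue of~\eqref{dualPb}, i.e.\ the strong duality $cong_{stat-frac} = \max_{\lambda} {Lin}_{stat-frac}(\lambda)$. Weak duality is immediate, but it is the wrong direction; what I need is the absence of a duality gap, which is exactly what lets me pick $\lambda^{*}$ with ${Lin}_{stat-frac}(\lambda^{*}) = cong_{stat-frac}$. I would secure this as in Section~\ref{sec:lagrange}: the convexity of $\mathcal{U}_{stat}(\mathcal{D})$ together with the linear objective and a strictly feasible $\beta$ (Slater's condition, or polyhedrality of $\mathcal{U}_{stat}(\mathcal{D})$ on these particular instances) rules out any gap. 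Once this static duality is in place, the remainder is the one-line substitution of $\lambda^{*}$ carried out above, so the theorem follows at once.
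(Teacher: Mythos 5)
Your proposal is correct and follows essentially the same route as the paper's own proof: take the congestion-gap instances of~\cite{Maggs97,BARTAL199919}, apply the Lagrangian identity~\eqref{dualPb} to both $\mathcal{U}(\mathcal{D})$ and $\mathcal{U}_{stat}(\mathcal{D})$, choose the cost vector $\lambda$ as the optimal dual multiplier of the static problem (so that ${Lin}_{stat-frac}(\lambda) = cong_{stat-frac}$ by strong duality, justified exactly as in the paper by polyhedrality of $\mathcal{U}_{stat}(\mathcal{D})$), and bound ${Lin}_{dyn-frac}(\lambda) \leq cong_{dyn-frac}$ since $\lambda$ is merely feasible in the dynamic maximization. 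The argument and even the choice of the key dual vector coincide with the paper's proof, so no further comparison is needed.
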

 \begin{proof} Similarly to $\mathcal{U}(\mathcal{D})$ defined when dynamic routing is considered, let  $\mathcal{U}_{stat}(\mathcal{D})$  be the set of capacity vectors for which there exists a static fractional routing satisfying all demand vectors of $\mathcal{D}$.   $\mathcal{U}_{stat}(\mathcal{D})$ is obviously a polyhedral set.  
 The mathematical programs \eqref{congestionPb}, \eqref{dualPb} and \eqref{suite:cong} can be considered in the same way: we only have to replace $\mathcal{U}(\mathcal{D})$ by   $\mathcal{U}_{stat}(\mathcal{D})$. All results stated above about the equivalence of \eqref{congestionPb}, \eqref{dualPb} and \eqref{suite:cong} still hold in the static case.  Consider an instance from \cite{Maggs97,BARTAL199919}  for which $\frac{cong_{stat-frac}}{cong_{dyn-frac}}$ is $\Omega(\log n)$. $cong_{stat-frac}$ is computed from \eqref{suite:cong}. Then there is at least one vector $\lambda^{stat} \geq 0$ and one vector $u^{stat} \in \mathcal{U}_{stat}(D)$ such that  $cong_{stat-frac} = \sum_{e \in E(G)} \lambda^{stat}_e u^{stat}_e$ and $\sum_{e \in E(G)} \lambda^{stat}_e c_e = 1$. This implies that  $u^{stat}$ is an optimal solution of the linear problem where we minimize  $\sum_{e \in E(G)} \lambda^{stat}_e u_e$  under the condition $u \in \mathcal{\mathcal{U}}_{stat}(D)$. We consequently have $cong_{stat-frac} = Lin_{stat-frac}$ for the considered instance.\\
 Let $u' \in \mathcal{U}(\mathcal{D})$ be an optimal solution minimizing the linear cost $\sum_{e \in E(G)} \lambda^{stat}_e u_e$  under dynamic routing. In other words, $Lin_{dyn-frac} = \sum_{e \in E(G)} \lambda^{stat}_e u'_e$  when the coefficients of the objective function are $\lambda^{stat}$.
 Moreover, we know from \eqref{suite:cong}  that $cong_{dyn-frac}$ is obtained by maximizing through $\lambda$, implying that $cong_{dyn-frac} \geq \sum_{e \in E(G)} \lambda^{stat}_e u'_e = Lin_{dyn-frac}$.
 Using that $\frac{cong_{stat-frac}}{cong_{dyn-frac}}$ is $\Omega(\log n)$, we get that $\frac{{Lin}_{stat-frac}}{{Lin}_{dyn-frac}}$ is $\Omega(\log n)$ for the same instance where the linear objective function is defined through $\lambda^{stat}$.
 \end{proof}
 }
\section{Restriction to a constant number of given paths per commodity}

First, observe that in the proof of Lemma \ref{lemmaInapproximability}, the minimum congestion instances built there are such that  some commodities can be routed along many paths. For example, in graph $G_1$ (Figure \ref{fig:g1}), commodity $h_0$ (between $s$ and $t$) can use up to $m$ paths.  Second, consider an instance of the minimum congestion problem where only one path is given for each commodity. Then computing the minimum congestion is  easy  since we only have to compute $\max\limits_{d \in \mathcal{D}} \sum\limits_{h \in \mathcal{H}_e} d_h$ where $\mathcal{H}_e$ denotes the set of commodities routed through $e$. The congestion is just given by $\max\limits_{e \in E(G)} \frac{1}{c_e} \max\limits_{d \in \mathcal{D}} \sum\limits_{h \in \mathcal{H}_e} d_h$.
Combining these two observations, one can  wonder whether the difficulty of the congestion problem is simply due to the number of possible paths that can be used by each commodity. We will show that the problem is still difficult even if each commodity can be routed along at most two fixed given paths.  
\begin{figure}
\center
\includegraphics[width=13.5cm,height = 4cm]{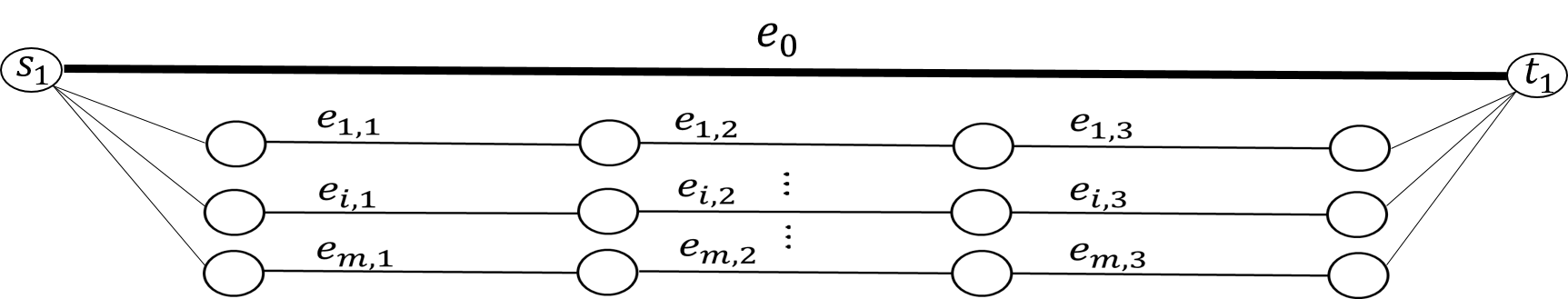}
\caption{$G'$}
\label{fig:gprime}
\end{figure}

\begin{theorem}
Unless $P=NP$, for some positive constant $k$, minimum congestion cannot be approximated within a
ratio $k$ even if each commodity can be routed along at most two given paths.  
\end{theorem}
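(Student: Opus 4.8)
The plan is to reduce from Gap-3-SAT, reusing the clause-path skeleton of $G_1$ but replacing the single many-path commodity $h_0$ by a family of two-path commodities routed over the boat-shaped graph $G'$ of Figure~\ref{fig:gprime}. It is convenient to use that the optimal congestion equals $\max_{d\in\mathcal D}\min_{\mathrm{rout.}}\max_e f_e(d)/c_e$, where the inner minimum now ranges only over routings sending each commodity along at most two prescribed paths. Exactly as in Lemma~\ref{lemmaInapproximability}, a capacity reservation that routes every extreme point of $\mathcal D$ also routes their convex combinations, so for the soundness direction it suffices to bound the congestion at the extreme points, which correspond to truth assignments.

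For the construction I keep a source $s$, a sink $t$, and for each clause $C_i$ a length-$3$ path $P_i$ from $s$ to $t$ with unit-capacity edges $e_{i,1},e_{i,2},e_{i,3}$ (the hull). For each position I add the single-path literal commodity $h_{i,j}$ on the edge $e_{i,j}$ with demand $\xi_{l_{i,j}}$, together with the constraints $\xi_{l_k}+\xi_{\lnot l_k}=1$, $\xi\ge 0$, so that a true literal saturates its edge. The new ingredient is a single deck edge $\varepsilon$ from $s$ to $t$ of capacity $\rho m$, and, for each clause, a flow commodity $g_i$ from $s$ to $t$ of fixed demand $1$ whose only two admissible paths are the hull path $P_i$ and the deck $\varepsilon$. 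Every commodity then uses at most two given paths, and the reduction is clearly polynomial.

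In the satisfiable case the adversary picks a satisfying assignment, so every $P_i$ has a saturated edge; splitting $g_i$ as $a_i$ on $P_i$ and $1-a_i$ on $\varepsilon$ forces congestion $1+a_i$ on that edge and $(m-\sum_i a_i)/(\rho m)$ on $\varepsilon$, and balancing these (all $a_i$ equal to $\frac{1-\rho}{1+\rho}$) shows that no two-path routing beats $\frac{2}{1+\rho}$, whence $cong\ge \frac{2}{1+\rho}$. In the unsatisfiable case $val(\varphi)<\rho$ guarantees that, under every assignment, strictly fewer than $\rho m$ clauses are satisfied, hence fewer than $\rho m$ hull paths are blocked: I route each $g_i$ whose $P_i$ is free along $P_i$ (congestion $1$) and each $g_i$ whose $P_i$ is blocked along $\varepsilon$, which then carries fewer than $\rho m$ units and has congestion $<1$; all literal edges carry at most one unit, so $cong\le 1$, and by the extreme-point argument this bound extends to all of $\mathcal D$.

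Thus $val(\varphi)=1\Rightarrow cong\ge \frac{2}{1+\rho}$ and $val(\varphi)<\rho\Rightarrow cong\le 1$, so any algorithm with approximation ratio $k<\frac{2}{1+\rho}$ would decide Gap-3-SAT, which is NP-hard by the PCP theorem; this yields inapproximability within the constant $k=\frac{2}{1+\rho}>1$. The step I expect to be most delicate is the completeness lower bound: one must verify that, because each $g_i$ is confined to the two options $P_i$ and $\varepsilon$, there is genuinely no routing cheaper than the balanced one, and that setting the deck capacity to exactly $\rho m$ simultaneously produces congestion $>1$ here and $\le 1$ in the soundness case. Aligning these two bounds by the right choice of deck capacity and of the fixed demands $d_{g_i}=1$ is precisely what makes the constant gap appear.
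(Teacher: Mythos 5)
Your proposal is correct and is essentially the paper's own construction: the same boat-shaped graph (unit-capacity three-edge clause paths, a deck edge of capacity $\rho m$, single-path literal commodities with $\xi_{l}+\xi_{\lnot l}=1$, and unit-demand two-path commodities choosing between their clause path and the deck), the same extreme-point reduction to truth assignments, and the same gap constants $1$ versus $\frac{2}{1+\rho}$. The only cosmetic difference is that you prove the satisfiable-case lower bound by balancing the flow splits $a_i$ directly, whereas the paper counts the $2m$ units of demand forced across a cut of capacity $(1+\rho)m$; both yield $\frac{2}{1+\rho}$, and your balancing step is easily made rigorous (if every $a_i<\frac{1-\rho}{1+\rho}$, then the deck edge carries more than $\rho m \cdot \frac{2}{1+\rho}$ units and its congestion exceeds $\frac{2}{1+\rho}$).
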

\begin{proof}

The proof is a simple modification of the proof of Lemma \ref{lemmaInapproximability} (case $\gamma = 1$). We are  going to slightly modify graph $G_1$  in such a way that at most $2$ paths are allowed for  each commodity. 
Given a 3-SAT formula $\varphi$ with $m$ \textcolor{black}{clauses},  we construct $G',\mathcal{H}', \mathcal{D}'$ as follows.  We first create two nodes $s_1$ and $t_1$ and an edge $e_0$ between $s_1$ and $t_1$ of capacity $m \rho$ ($\rho$ is the constant in PCP theorem). 
Then for each clause index $i = 1,...,m$, as in Lemma \ref{lemmaInapproximability}, we create $3$ consecutive edges $e_{i,j}$ ($j = 1,2,3$) such that $t(e_{i,j}) = s(e_{i,j + 1})$ and a commodity $h_{i,j}$ between $s(e_{i,j}) $ and $t(e_{i,j})$ that is allowed to be routed only through  $e_{i,j}$. We also add one edge between $s(e_{i,1 })$ and $s_1$ and one edge connecting $t_1$ and $t(e_{i,3 })$ of infinite capacity and a commodity $h_{i,0}$ between $s(e_{i,1 })$ and $t(e_{i,3 })$ with a demand $d_{h_{i,0}} = 1 $. $h_{i,0}$ is allowed to be routed only through the path $P_i$ containing the edges $(e_{i,1} , e_{i,2} , e_{i,3} )$ and the path going through $s_1$, $e_0$ and $t_1$ (see Figure \ref{fig:gprime}).
We consider auxiliary variables  $\xi_{l }$ for  each literal $l$.  We add constraints $\xi_{l } + \xi_{ \lnot l } = 1 $  and   $d_{h_{i,j} } = \xi_{l_{i,j} }$. 

If $val(\varphi) < \rho$ there are at least $ m( 1 - \rho )$ commodities $h_{i,0}$ that can be routed on the paths $P_i$ and the  remaining $ m \rho $ can  be routed on the edge $e_0$. This implies that each extreme point of $\mathcal{D}'$ can be routed with congestion $\leq 1$. Notice that  the observation made in the proof of Lemma \ref{lemmaInapproximability} about extreme points is still valid here: extreme points corresponds to $0-1$ values of the variables $\xi_{l}$.

If $val(\varphi) = 1$, then there is a cut and a demand vector $d$ (corresponding to the truth assignment  satisfying $\varphi$) such that the capacity of the cut is $m \rho + m$ and the demand that needs to cross the cut is $ 2 m$. There is consequently at least one edge of congestion greater than or equal to $\frac{2 m}{  (1+\rho) m } = \frac{2}{1+\rho}$. By taking $k <\frac{2}{1+\rho}$ we get the wanted result.
\end{proof}
Finally, observe that the result above can also be stated for the linear cost case using again the Lagrange based reduction of the previous section. 

\begin{corollary}
\textcolor{black}{Unless P = NP,} for some positive constant $k$, robust network design with linear costs is difficult to approximate within a
ratio $k$ even if each commodity can be routed along at most two given paths.  
\end{corollary}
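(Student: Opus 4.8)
The plan is to reuse the Lagrange-based reduction of Section~\ref{sec:lagrange} essentially verbatim, observing that nothing in that argument depended on the routing being unrestricted. First I would note that when each commodity is constrained to be routed along at most two given paths, the set of admissible capacity vectors --- call it $\mathcal{U}(\mathcal{D})$ exactly as before, now with the routing variables confined to the prescribed paths --- is still polyhedral: a routing is described by a non-negative flow variable per (commodity, allowed path) pair together with the demand-satisfaction and capacity-aggregation constraints, all of which are linear, and $\mathcal{D}$ is itself polyhedral. Hence the programs \eqref{congestionPb}, \eqref{dualPb} and \eqref{suite:cong}, together with the no-duality-gap statement relating them, continue to hold with this restricted $\mathcal{U}(\mathcal{D})$ without any modification.

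Next I would apply Lemma~\ref{lem:approx} to this restricted setting. The proof of that lemma uses only two facts: that $\mathcal{U}(\mathcal{D})$ is polyhedral, and that one is given a polynomial-time oracle producing an $\alpha$-approximate minimizer of a non-negative linear cost $\lambda^T u$ over $\mathcal{U}(\mathcal{D})$. Therefore, assuming for contradiction that robust network design with linear cost restricted to two paths per commodity admits a polynomial-time $\alpha$-approximation, the cutting-plane procedure of Section~\ref{sec:lagrange}, driven by this oracle, computes in polynomial time an $\alpha$-approximation of the minimum congestion over the same two-path-restricted routing set.

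Finally I would invoke the preceding theorem, which exhibits a constant $k$ (any $k < \frac{2}{1+\rho}$) such that minimum congestion restricted to at most two given paths per commodity cannot be approximated within $k$ unless $P = NP$. Combining this with the previous step forces $\alpha \geq k$: an $\alpha < k$ approximation for the linear-cost two-path problem would yield, through the reduction, an $\alpha < k$ approximation for the two-path congestion problem, contradicting that theorem. This establishes the claimed constant-factor inapproximability for the linear-cost variant, with the very same constant $k$.

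The step I expect to require the most care is the first one: making explicit that the Lagrange apparatus is agnostic to the particular polyhedron $\mathcal{U}(\mathcal{D})$ that is plugged in, and in particular that restricting to two paths per commodity neither destroys polyhedrality nor interferes with the separation-optimization equivalence underlying the cutting-plane algorithm. Once this oracle-level observation is in place, the remainder is a direct transfer of Lemma~\ref{lem:approx} and the preceding hardness theorem, so no genuinely new estimate is needed.
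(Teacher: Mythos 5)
Your proposal is correct and follows exactly the paper's intended argument: the paper proves this corollary by the one-line observation that the Lagrange-based reduction of Section~\ref{sec:lagrange} applies verbatim once $\mathcal{U}(\mathcal{D})$ is replaced by its two-path-restricted (still polyhedral) counterpart, which is precisely what you spell out. Your added details --- polyhedrality via path-flow variables, the oracle-level genericity of Lemma~\ref{lem:approx}, and the transfer of the constant $k < \frac{2}{1+\rho}$ from the congestion hardness theorem --- are all accurate fillings-in of what the paper leaves implicit.
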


\textcolor{black}{
\section*{Acknowledgments}
The authors are grateful to anonymous referees for their helpful
suggestions that greatly improved the quality of the paper. We also point out that Theorem \ref{the:gap} was added following a suggestion of one of the reviewers who was wondering whether the reduction of Section \ref{sec:lagrange}
could be used to prove the theorem.} 

\newpage

\bibliographystyle{elsarticle-num}
\bibliography{bibliography} 

\end{document}